\newtheorem{theorem}{\bf{Theorem}}
\newtheorem{remark}{\bf{Remark}}
\newtheorem{proposition}{\bf{Proposition}}
\newtheorem{definition}{\bf{Definition}}
\begin{document}

\title{Does Gaussian Approximation Work Well for The Long-Length Polar Code Construction?}

\author{Jincheng~Dai,~\IEEEmembership{Student Member,~IEEE},
        Kai~Niu,~\IEEEmembership{Member,~IEEE}, Zhongwei~Si,~\IEEEmembership{Member,~IEEE}, Chao~Dong,~\IEEEmembership{Member,~IEEE} and Jiaru Lin,~\IEEEmembership{Member,~IEEE}
\thanks{This work is supported by the National Natural Science Foundation of China (No. 61671080 \& No. 61171099), BUPT-SICE Excellent Graduate Students Innovation Fund and Huawei HIRP project. The material in this paper was presented in part at the Recent Results Session of IEEE International Symposium on Information Theory (ISIT), Hong Kong, June 2015.}
\thanks{The authors are with the Key Laboratory of Universal Wireless Communications, Ministry of Education, Beijing University of Posts and Telecommunications, Beijing 100876, China (email: \{daijincheng, niukai, sizhongwei, dongchao, jrlin\}@bupt.edu.cn).}
\vspace{-1em}
}

\maketitle

\begin{abstract}
Gaussian approximation (GA) is widely used to construct polar codes. However when the code length is long, the subchannel selection inaccuracy due to the calculation error of conventional approximate GA (AGA), which uses a two-segment approximation function, results in a catastrophic performance loss. In this paper, new principles to design the GA approximation functions for polar codes are proposed. First, we introduce the concepts of polarization violation set (PVS) and polarization reversal set (PRS) to explain the essential reasons that the conventional AGA scheme cannot work well for the long-length polar code construction. In fact, these two sets will lead to the rank error of subsequent subchannels, which means the orders of subchannels are misaligned, which is a severe problem for polar code construction. Second, we propose a new metric, named cumulative-logarithmic error (CLE), to quantitatively evaluate the remainder approximation error of AGA in logarithm. We derive the upper bound of CLE to simplify its calculation. Finally, guided by PVS, PRS and CLE bound analysis, we propose new construction rules based on a multi-segment approximation function, which obviously improve the calculation accuracy of AGA so as to ensure the excellent performance of polar codes especially for the long code lengths. Numerical and simulation results indicate that the proposed AGA schemes are critical to construct the high-performance polar codes.
\end{abstract}

\begin{IEEEkeywords}
Polar codes, Gaussian approximation (GA), polarization violation set (PVS), polarization reversal set (PRS), cumulative-logarithmic error (CLE).
\end{IEEEkeywords}

\IEEEpeerreviewmaketitle

\section{Introduction}

\IEEEPARstart{P}{olar} codes proposed by Ar{\i}kan \cite{arikan} have been proved to achieve the capacity of any symmetric binary input symmetric discrete memoryless channels (B-DMCs) under a successive cancellation (SC) decoder as the code length goes to infinity. Recently, polar codes have been identified as one of the channel coding schemes in the 5G wireless communication system due to its excellent performance \cite{3GPP1}. To construct polar codes, the channel reliabilities are calculated efficiently using the symmetric capacities of subchannels or the Bhattacharyya parameters for the binary-input erasure channels (BECs). As a heuristic method, Ar{\i}kan has suggested to use the recursion which is optimal only for BECs also for other B-DMCs \cite{arikan_rm}. Mori \emph{et al.} regarded the construction problem as an instance of density evolution (DE) \cite{mori_de1}, which theoretically has the highest accuracy. Considering its high computational complexity, Tal and Vardy devised two approximation methods to simplify the calculation of DE, by which one can get the upper and lower bounds on the error probability of each subchannel. Tal and Vardy's method has almost no performance loss compared with DE \cite{talvardy,Pedarsani,talvardy_ISIT,talvardy_simplified}. Afterwards, Gaussian approximation (GA) was proposed to further reduce the computational complexity of DE \cite{trifonov} without much sacrifice in accuracy, which became popular in the construction of polar codes thanks to its good tradeoff between the complexity and performance.

In the GA construction of polar codes, the bit log-likelihood ratio (LLR) of each subchannel is assumed to obey a constraint Gaussian distribution in which the mean is half of the variance. Hence, the iterative evaluation of each subchannel reliability is only involved with the mean update of LLRs. However, the LLR mean updates in check nodes still depend on complex integration. Consequently, for construction of polar codes, the computational complexity of exact GA (denoted by EGA) grows exponentially with the polarization levels. This makes EGA too complicated to be practically employed. Therefore, in practical implementation, like GA utilized in LDPC codes, the well-known approximate version of GA (denoted by AGA) given by Chung \emph{et al.} based on a two-segment approximation function is used to speed up the calculations \cite{ldpc_2GA,xidian_GA,practical_GA}.

Initially, the approximation function in conventional AGA chosen by Chung is suitable for LDPC codes. However, in principle, we don't know whether this approximation method can be also good for the polar codes. In fact, the calculation error of AGA versus EGA will be accumulated and amplified in the recursion process of polar codes construction. Consequently, this phenomenon causes the inaccurate subchannel selection and results in a catastrophic block error ratio (BLER) performance loss for the long code lengths. Taking the polar code structure characteristics into consideration, it is lack of the comprehensive framework to design the AGA schemes for the construction of polar codes. In addition, the performance evaluation method for different AGA is also absent.

On the polar code construction, we find that the rank error of the subchannels and the calculation error of each subchannel's reliability are two critical factors to affect the accuracy of AGA approximation function. Here, ranking error means the orders of subchannels are misaligned. For various AGA schemes, the two factors will result in different evaluation error of each subchannel. Followed by above two distortions, we reveal the essential reason that the conventional AGA scheme leads to a catastrophic performance loss. Our ultimate goal is to propose systematic design rules of the AGA approximation function for polar codes, which achieves the excellent performance as well as reduces the GA computational complexity.

Our aim in this paper is to provide the new principles to design the multi-segment GA approximation functions so as to improve the calculation accuracy of AGA and guarantee the excellent performance of polar codes. The main contributions can be summarized in the following three aspects:
\begin{itemize}
  \item First, we take a closer investigation at the reason behind the poor performance of the long-length polar codes when the conventional versions of AGA (e.g. Chung's scheme) are used. To this end, we introduce the concepts of polarization violation set (PVS) and polarization reversal set (PRS). In the AGA process, when the subchannel's LLR mean belongs to the two sets, it will bring in the rank error and the polarization is `violated' or `reverted' among the subsequent subchannels. This phenomenon is not consistent with Ar{\i}kan's fundamental polarization relationship. The two sets reveal the essential reason that polar codes constructed by the conventional AGA present poor performance at long code lengths.
  \item Second, after eliminating PVS and PRS, we further propose a new metric, named cumulative-logarithmic error (CLE) of channel polarization, to quantitatively evaluate the remainder calculation error between AGA and EGA in the construction of polar codes. We also derived the upper bound of CLE to simplify its calculation. With this bound, the performance of different versions of AGA can be easily evaluated by analytic calculation rather than redundant the Monte-Carlo simulation.
  \item Finally, guided by PVS, PRS and the CLE bound, we propose new design rules for the improved AGA techniques which is tailored for the polar code construction. In this way, a systematic framework is established to design the high accuracy and low complexity AGA scheme for polar codes at any code length. Followed by the proposed rules, three new AGA schemes are given to guarantee the excellent performance of polar codes.
\end{itemize}

The remainder of the paper is organized as follows. The preliminaries of polar coding are described in Section II. Then the conventional GA is introduced in Section III. Section IV makes detailed error analysis of GA, in which the concepts of PVS, PRS and CLE are proposed. Then the new design rules of AGA approximation functions are given in Section V, and new AGA schemes with complexity comparison are also given in this part. Different versions of AGA are compared with the help of CLE bound in Section VI, where the simulation results are also analyzed. Finally, Section VII concludes this paper.

\section{Preliminaries}

\subsection{Notation Conventions}
In this paper, we use calligraphic characters, such as ${\cal X}$, to denote sets. Let $\left| \cal X \right|$ denote cardinality of $\cal X$. We write lowercase letters (e.g., $x$) to denote scalars. We use notation $v_1^N$ to denote a vector $\left( {{v_1},{v_2}, \cdots ,{v_N}} \right)$ and $v_i^j$ to denote a subvector $\left( {{v_i},{v_{i + 1}}, \cdots ,{v_j}} \right)$. The sets of binary and real field are denoted by $\mathbb{B}$ and $\mathbb{R}$, respectively. Specially, let ${\cal N}(a,b)$ denote Gaussian distribution, where $a$ and $b$ represent the mean and the variance respectively. For polar coding, only square matrices are involved in this paper, and they are denoted by bold letters. The subscript of a matrix indicates its size, e.g. ${{\bf{F}}_N}$ represents an $N \times N$ matrix ${\bf{F}}$. The Kronecker product of two matrices ${\bf{F}}$ and ${\bf{G}}$ is expressed as ${\bf{F}} \otimes {\bf{G}}$, and the $n$-fold Kronecker power of ${\bf{F}}$ is denoted by ${{\bf{F}}^{ \otimes n}}$.

Throughout this paper, $\log \left(  \cdot  \right)$ means ``logarithm to base 2'', and $\ln \left(  \cdot  \right)$ stands for the natural logarithm.

\subsection{Polar Codes and SC Decoding}
Let $W$ : $\cal X \to \cal Y$ denote a B-DMC with input alphabet $\cal X$ and output alphabet $\cal Y$. The channel transition probabilities are given by $W\left( {y\left| x \right.} \right)$, $x \in \cal X$ and $y \in {\cal Y}$. Given the code length $N = {2^n}$, $n = 1,2, \cdots $, the information length $K$, and the code rate $R = K/N$, the polar coding is described as \cite{arikan}. After the channel combining and splitting operations on $N$ independent duplicates of $W$, we obtain $N$ successive uses of synthesized binary input channels $W_N^{\left( j \right)}$, $j = 1,2, \cdots ,N$, with transition probabilities $W_N^{\left( j \right)}( {y_1^N,u_1^{j - 1}\left| {{u_j}} \right.} )$. The information bits can be assigned to the channels with indices in the information set ${\cal A}$, which are the more reliable subchannels. The complementary set ${\cal A}^c$ denotes the frozen bit set and the frozen bits ${u_{{{\cal A}^c}}}$ can be set as the fixed bit values, such as all zeros, for the symmetric channels. To put it in another way \cite{arikan}, polar coding is performed on the constraint $x_1^N = u_1^N{{\bf{G}}_N}$, where ${{\bf{G}}_N}$ is the generator matrix and $u_1^N,x_1^N \in {\left\{ {0,1} \right\}^N}$ are the source and code block respectively. The source block $u_1^N$ consists of information bits ${u_{{{\cal A}}}}$ and frozen bits ${u_{{{\cal A}^c}}}$. The generator matrix can be defined as ${{\bf{G}}_N} = {{\bf{B}}_N}{\bf{F}}_2^{ \otimes n}$, where ${{\bf{B}}_N}$ is the bit-reversal permutation matrix and ${{\bf{F}}_2} = \left[ { \begin{smallmatrix} 1 & 0 \\ 1 &  1 \end{smallmatrix} } \right]$.

As mentioned in \cite{arikan}, polar codes can be decoded by successive cancellation (SC) decoding algorithm. Let $\hat u_1^N$ denote an estimate of source block $u_1^N$. After receiving $y_1^N$, the bits $\hat u_j$ are successively determined with index from $1$ to $N$ in the following way:
\begin{equation}\label{SC_decision}
  {\hat u_j} = \left\{ \begin{array}{ll}
    {h_j}( {y_1^N,\hat u_1^{j - 1}} ) & {j \in \cal A},\\
    {u_j} & j \in {{\cal A}^c},
    \end{array} \right.
\end{equation}
where
\begin{equation}\label{SC_decision_where}
  {h_j}( {y_1^N,\hat u_1^{j - 1}} ) = \left\{ \begin{array}{ll}
    0 & {\rm{if}}~\frac{{W_N^{\left( j \right)}\left( {y_1^N,\hat u_1^{j - 1}\left| 0 \right.} \right)}}{{W_N^{\left( j \right)}\left( {y_1^N,\hat u_1^{j - 1}\left| 1 \right.} \right)}} \ge 1,\\
    1 & \rm{otherwise}.
    \end{array} \right.
\end{equation}
Given a polar code with code length $N$, information length $K$ and selected channels indices ${\cal A}$, the BLER under SC decoding algorithm is upper bounded by
\begin{equation}\label{SCbound}
  {P_e}\left( {N,K,{\cal A}} \right) \le \sum\limits_{j \in {\cal A}} {{P_e}\left( {W_N^{\left( j \right)}} \right)},
\end{equation}
where ${{P_e}( {W_N^{\left( j \right)}} )}$ is the error probability of the $j$-th subchannel. This BLER upper bound is named as the \emph{SC bound}.

\section{Gaussian Approximation for Polar Codes}
In this section, we use the code tree to describe the process of channel polarization. Based on the tree structure, we present and analyze the basic procedure of GA.

\subsection{Code Tree}
The channel polarization process can be expressed on a code tree. For a polar code with length $N=2^n$, the corresponding code tree $\cal T$ is a perfect binary tree\footnote{A perfect binary tree is a binary tree in which all interior nodes have two children and all leaves have the same depth or same level.}. Specifically, $\cal T$ can be represented as a $2$-tuple $\left( {\cal V,\cal B} \right)$, where $\cal V$ and $\cal B$ denote the set of nodes and the set of edges, respectively.

Depth of a node is the length of the path from the root to this node. The set of all the nodes at a given depth $i$ is denoted by ${{\cal V}_i}$, $i = 0,1,2, \cdots ,n$. The root node has a depth of zero. Let $v_i^{\left( j \right)}$, $j = 1,2, \cdots ,{2^i}$, denote the $j$-th node from left to right in ${{\cal V}_i}$. As an illustration, Fig. \ref{polar_codetree} shows a toy example of code tree with $N = 16$, which includes $4$ levels. In the nodes set ${{\cal V}_2}$, the $2$nd node from left to right is denoted by $v_2^{\left( 2 \right)}$. Except for the nodes at the $n$-th depth, each $v_i^{\left( j \right)} \in {{\cal V}_i}$ has two descendants in ${{\cal V}_{i+1}}$, and the two corresponding edges are labeled as 0 and 1, respectively. The nodes $v_n^{\left( j \right)} \in {\cal V}_n$ are called leaf nodes. Let ${\cal T}( {v_i^{\left( j \right)}} )$ denote a subtree with a root node ${v_i^{\left( j \right)}}$. The depth of this subtree can be defined as $n-i$ which indicates the difference between the depth of the leaf node and that of the root node. In addition, the node ${v_i^{\left( j \right)}}$ has two subtrees, that is, the left subtree ${{\cal T}_{{\rm{left}}}} = {\cal T}(v_{i + 1}^{\left( {2j - 1} \right)})$ and the right subtree ${{\cal T}_{{\rm{right}}}} = {\cal T}(v_{i + 1}^{\left( {2j} \right)})$.

All the edges in the set $\cal B$ are partitioned into $n$ levels ${\cal B}_l$, $l = 1,2, \cdots ,n$. Each edge in the $l$-th level ${\cal B}_l$ is incident to two nodes: one at depth $l-1$ and the other at depth $l$. An $i$-depth node is corresponding to a path $\left( {{b_1},{b_2}, \cdots ,{b_i}} \right)$ which consists of $i$ edges, with $b_l \in {\cal B}_l$, $l = 1,2, \cdots ,i$. A vector $b_1^i = \left( {{b_1},{b_2}, \cdots ,{b_i}} \right)$ is used to depict the above path.

\subsection{Gaussian Approximation for Polar Codes}
Trifonov \cite{trifonov} suggests a polar code construction method for the binary input AWGN (BI-AWGN) channels based on a Gaussian assumption in every recursion step. For the BI-AWGN channels with noise variance ${\sigma ^2}$, the coded bits are modulated using binary phase shift keying (BPSK). The transition probability $W(y|x)$ is written as
\begin{equation}\label{BIAWGN_trans}
  W\left( {y\left| x \right.} \right) = \frac{1}{{\sqrt {2\pi {\sigma ^2}} }}{{e}^{ - \frac{{{{\left( {y - \left( {1 - 2x} \right)} \right)}^2}}}{{2{\sigma ^2}}}}},
\end{equation}
where $x \in \mathbb{B}$ and $y \in \mathbb{R}$. The LLR of each received symbol $y$ is denoted by
\begin{equation}\label{BIAWGN_LLR}
  L\left( y \right) = \ln \frac{{W\left( {y\left| 0 \right.} \right)}}{{W\left( {y\left| 1 \right.} \right)}} = \frac{{2y}}{{{\sigma ^2}}}.
\end{equation}
Without loss of generality, we assume that all-zero codeword is transmitted. One can check $L\left( {{y}} \right) \sim {\cal N}\left( {\frac{2}{{{\sigma ^2}}},\frac{4}{{{\sigma ^2}}}} \right)$.

{\bf{GA assumption:}} The LLR of each subchannel obeys a constraint Gaussian distribution in which the mean is half of the variance \cite{trifonov,ldpc_2GA,xidian_GA,practical_GA}.

According to the GA assumption, the only issue needed to be dealt with is the LLR mean. Therefore, in the construction of polar codes, to obtain the reliability of each subchannel, we trace their LLR mean. This recursive calculation process can be performed on the code tree. The set of LLRs corresponding to the nodes at depth $i$ is denoted by ${{\cal L}_i}$, $i = 0,1,2, \cdots ,n$. Let $L_i^{\left( j \right)}$, $j = 1,2, \cdots ,{2^i}$, denote the $j$-th element in ${{\cal L}_i}$. We write $m_i^{\left( j \right)}$ as the mean of $L_i^{\left( j \right)}$. So the mean of LLR from the channel information can be written as $m_0^{\left( 1 \right)} = \frac{2}{{{\sigma ^2}}}$, and under the GA assumption we have
\begin{equation}\label{LLR_dist}
  L_i^{\left( j \right)} \sim {\cal N}\left( {m_i^{\left( j \right)},2m_i^{\left( j \right)}} \right).
\end{equation}
Here, $m_i^{\left( j \right)}$ can be computed recursively as
\begin{equation}
  m_{i + 1}^{\left( {2j - 1} \right)} = {f_c}\left( {m_i^{\left( j \right)}} \right),m_{i + 1}^{\left( {2j} \right)} = {f_v}\left( {m_i^{\left( j \right)}} \right),
\end{equation}
where the functions ${f_c}\left( t \right)$ and ${f_v}\left( t \right)$ are used for check nodes (left branch) and variable nodes (right branch), respectively. The physical meaning of the function variable $t$ stands for subchannel's LLR mean in GA construction. We have
\begin{equation}\label{f1f2}
    \left\{ \begin{array}{l}
    {f_c}\left( t \right) = {\phi ^{ - 1}}\left( {1 - {{\left( {1 - \phi \left( t \right)} \right)}^2}} \right),\\
    {f_v}\left( t \right) = 2t.
    \end{array} \right.
\end{equation}
In EGA, $\phi \left( t \right)$ is written as
\begin{equation}
    {\phi \left( t \right) = \left\{ \begin{array}{ll}
    1 - \frac{1}{{\sqrt {4\pi t} }}\int_{\mathbb{R}} {\tanh \left( \frac{z}{2} \right){{e}^{ - \frac{{{{\left( {z - t} \right)}^2}}}{{4t}}}}dz} & t > 0,\\
    1 & t = 0,
    \end{array} \right.}
\end{equation}
where $\tanh \left(  \cdot  \right)$ denotes hyperbolic tangent function. It is easy to check that $\phi \left( t \right)$ is continuous and monotonically decreasing on $\left[ {0, + \infty } \right)$, with $\phi \left( 0 \right) = 1$ and $\phi \left( { + \infty } \right) = 0$ \cite{ldpc_2GA}. As an illustration, the red bold edge in Fig. \ref{polar_codetree} depicts the recursive calculation process of $m_4^{\left( 8 \right)}$, whose corresponding path is denoted by $\left( {{b_1},{b_2},{b_3},{b_4}} \right) = \left( {0,1,1,1} \right)$.

\begin{figure}[t]
\vspace{0.3em}
\setlength{\abovecaptionskip}{0.cm}
\setlength{\belowcaptionskip}{-0.cm}
  \centering{\includegraphics[scale=1.25]{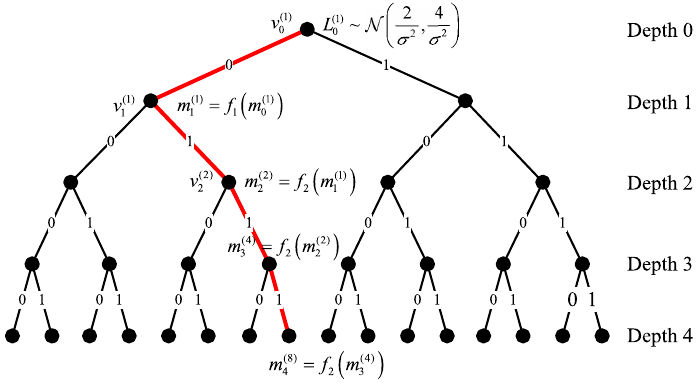}}
  \caption{An example of code tree for $N = 16$, $n = 4$. The red bold edge shows the recursive calculation process of $m_4^{\left( 8 \right)}$.}\label{polar_codetree}
  \vspace{-1em}
\end{figure}

Obviously, the exact calculation of LLR mean in check nodes requires complex integration, which results in a high computational complexity. Therefore, Chung \emph{et al.} give the well known two-segment approximation function of $\phi \left( t \right)$, denoted by $\varphi\left( t \right)$, for the analysis of LDPC codes in \cite{ldpc_2GA},
\begin{equation}\label{convGA}
    {\varphi\left( t \right) = \left\{ \begin{array}{ll}
    {{e}^{ - 0.4527{t^{0.86}} + 0.0218}} & 0 < t < 10,\\
    \sqrt {\frac{\pi }{t}} {{e}^{ - \frac{t}{4}}} \left( {1 - \frac{{10}}{{7t}}} \right) & t \ge 10.
    \end{array} \right.}
\end{equation}
(\ref{convGA}) is also widely used in the construction of polar codes \cite{xidian_GA}. Its corresponding AGA algorithm is denoted by ``Chung".

By the GA assumption of (\ref{LLR_dist}), the error probabilities of polarized subchannel ${P_e}( {W_N^{\left( j \right)}} )$ can be written as
\begin{equation}
    {P_e}\left( {W_N^{\left( j \right)}} \right) = Q\left( {\frac{{m_n^{\left( j \right)}}}{{\sqrt {2m_n^{\left( j \right)}} }}} \right) = Q\left( {\sqrt {\frac{{m_n^{\left( j \right)}}}{2}} } \right),
\end{equation}
where $Q\left( \varsigma \right) = \frac{1}{{\sqrt {2\pi } }}\int_{\varsigma}^{ + \infty } {{{e}^{ - \frac{{{z^2}}}{2}}}dz}$. Thus, the SC bound can be written as
\begin{equation}
  {P_e}\left( {N,K,\cal{A}} \right) \le \sum\limits_{j \in \cal{A}} {Q\left( {\sqrt {\frac{{m_n^{\left( j \right)}}}{2}} } \right)}.
\end{equation}
Since $Q\left( \varsigma \right)$ is a monotone decreasing function, the subchannel $W_N^{\left( j \right)}$ with a larger mean $m_n^{\left( j \right)}$ has higher reliability. The construction of polar codes corresponds to the selection of best $K$ subchannels among $N$ as information set ${\cal A}$ in terms of the LLR means $m_n^{\left( j \right)}$, where $j = 1,2,\cdots,N$.

\section{Error Analysis of Gaussian Approximation}
In this section, we introduce the concepts of polarization violation set (PVS) and polarization reversal set (PRS). By calculating the two sets, we demonstrate the intrinsic reason that polar codes constructed by conventional AGA suffer from catastrophic performance loss at long code lengths. In order to quantitatively evaluate the remainder calculation error between AGA and EGA, we further propose the concept of cumulative-logarithmic error (CLE) of channel polarization and give a bound to simplify its calculation. Based on the CLE bound, we can efficiently evaluate the performance of different approximation functions in AGA.

\subsection{PVS and PRS}
\begin{proposition}\label{proposition_capacity_LLRmean}
  \emph{Under the GA assumption, each subchannel's symmetric capacity monotonically increases with its LLR mean.}
\end{proposition}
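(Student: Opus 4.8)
The plan is to show that, under the GA assumption, the symmetric capacity of a subchannel is a well-defined, strictly increasing function of its single free parameter, the LLR mean $m$. First I would record that by (\ref{LLR_dist}) the subchannel LLR $L$, conditioned on the all-zero input, obeys ${\cal N}(m,2m)$, so its entire density is pinned down by $m$ alone. For any binary-input symmetric channel the symmetric capacity can be written as a functional of the LLR density, namely
\begin{equation}\label{cap_functional}
  I(m) = 1 - {\mathbb E}\left[ \log\left( 1 + {e}^{-L} \right) \right], \quad L \sim {\cal N}(m,2m),
\end{equation}
so that proving the proposition amounts to showing $I(m)$ is strictly increasing in $m$.

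The cleanest route I would take is a channel-degradation argument that avoids any explicit integration. Note that ${\cal N}(m,2m)$ is exactly the LLR density produced by a BI-AWGN channel with noise variance $\sigma^2 = 2/m$: indeed, the discussion around (\ref{BIAWGN_LLR}) shows such a channel yields $L \sim {\cal N}(2/\sigma^2, 4/\sigma^2)$, which matches ${\cal N}(m,2m)$ under $m = 2/\sigma^2$. Since the symmetric capacity depends only on the LLR density, the GA subchannel of mean $m$ has the same symmetric capacity as this BI-AWGN channel. Now for $m_1 < m_2$ the associated noise variances satisfy $\sigma_1^2 > \sigma_2^2$, and the noisier channel is a stochastically degraded version of the cleaner one: it can be realized by cascading the cleaner channel with an independent additive Gaussian of variance $\sigma_1^2 - \sigma_2^2$. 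By the data-processing inequality, such degradation cannot increase the symmetric capacity, and the increase is strict because the appended noise is non-degenerate. Hence $I(m_1) < I(m_2)$, which is the claim.

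An alternative, more computational route would substitute $L = m + \sqrt{2m}\,Z$ with $Z \sim {\cal N}(0,1)$ into (\ref{cap_functional}) and differentiate $I(m)$ under the integral sign, using that $g(l) = \log(1 + {e}^{-l})$ is decreasing and convex. The step I expect to be the main obstacle on this route is signing $\tfrac{d}{dm}{\mathbb E}[g(L)]$: because increasing $m$ inflates both the mean and the variance of $L$, monotonicity and convexity of $g$ do not by themselves settle the sign, and one must exploit the consistency identity $p_m(-l) = p_m(l)\,{e}^{-l}$, which holds precisely because the variance equals twice the mean, to tame the variance contribution. This analytic delicacy is exactly what the degradation argument circumvents, so I would present that as the primary proof.
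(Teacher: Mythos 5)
Your proof takes essentially the same route as the paper: both identify the GA subchannel with a BI-AWGN channel of noise variance $\sigma^2 = 2/m$ and then conclude from the fact that the BI-AWGN symmetric capacity $h(\sigma^2)$ is monotonically decreasing in $\sigma^2$. The only difference is that the paper simply asserts this monotonicity, whereas you justify it via stochastic degradation and the data-processing inequality, which is a correct (and slightly more complete) way to close that step.
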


It is worth noticing that a BI-AWGN channel's capacity $I\left( W \right)$ is written as
\begin{equation}\label{BI_AWGN}
\begin{aligned}
I\left( W \right) & = h\left( {{\sigma ^2}} \right)\\
~ & \buildrel \Delta \over = \frac{1}{2}\sum\limits_{x \in \mathbb{B}} {\int_{\mathbb{R}} {W\left( {y\left| x \right.} \right)\log \left( {\frac{{2W\left( {y\left| x \right.} \right)}}{{W\left( {y\left| 0 \right.} \right) + W\left( {y\left| 1 \right.} \right)}}} \right)dy} },
\end{aligned}
\end{equation}
where the transition probability $W\left( {y\left| x \right.} \right)$ is given as (\ref{BIAWGN_trans}). In terms of the GA principle, each subchannel is approximated by a BI-AWGN channel $W$ with LLR mean $m$. In addition, the variance of corresponding additive white Gaussian noise is ${\sigma ^2} = \frac{2}{m}$ under GA assumption. Since the function $h\left( {{\sigma ^2}} \right)$ monotonically decreases with ${\sigma ^2}$, the symmetric capacity $I\left( W \right)$ monotonically increases with its LLR mean $m$. In addition, we have
\begin{equation}\label{capacity_limits}
\mathop {\lim }\limits_{m \to 0} I\left( W \right) = 0,\mathop {\lim }\limits_{m \to  + \infty } I\left( W \right) = 1.
\end{equation}

\begin{proposition}\label{Proposition_LLRmean_relationship}
  \emph{For the size-two channel polarization, suppose $\left( {W,W} \right) \mapsto ( {{W_2^{(1)} },{W_2^{(2)} }} )$. Under the GA assumption, the LLR means corresponding to $W$, ${W_2^{(1)} }$ and ${W_2^{(2)} }$ are represented as $m$, ${m_2^{(1)} }$ and ${m_2^{(2)} }$, respectively. Then, the LLR means should satisfy
    \begin{equation}\label{LLR_inequlity}
      {m_2^{(1)} } \le m \le {m_2^{(2)} }
    \end{equation}
  with equality if and only if $m = 0$ or $m = +\infty$.
  }
\end{proposition}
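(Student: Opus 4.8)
The plan is to read the two target means directly off the mean-update recursion together with (\ref{f1f2}). Since $W_2^{(1)}$ is produced by the check (left) branch and $W_2^{(2)}$ by the variable (right) branch, I get $m_2^{(1)} = f_c(m) = \phi^{-1}\!\left(1 - (1-\phi(m))^2\right)$ and $m_2^{(2)} = f_v(m) = 2m$. I would then prove the two halves of (\ref{LLR_inequlity}) separately and, in each case, record exactly when the inequality becomes an equality.

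The right inequality $m \le m_2^{(2)}$ is immediate: because an LLR mean is nonnegative, $m_2^{(2)} - m = 2m - m = m \ge 0$, which is an equality precisely at $m = 0$ (and, in the extended sense, at $m = +\infty$). The left inequality $m_2^{(1)} \le m$ is the real content. The tool I would use is the stated behaviour of $\phi$: it is continuous and monotonically decreasing on $[0,+\infty)$ with $\phi(0) = 1$ and $\phi(+\infty) = 0$, so $\phi^{-1}$ is decreasing on $(0,1]$. Applying the decreasing map $\phi$ to both sides of $\phi^{-1}\!\left(1 - (1-\phi(m))^2\right) \le m$ reverses the inequality, turning the claim into the equivalent statement $1 - (1-\phi(m))^2 \ge \phi(m)$.

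Setting $u = \phi(m) \in [0,1]$, this collapses to $2u - u^2 \ge u$, i.e. $u(1-u) \ge 0$, which holds throughout $[0,1]$ with equality iff $u \in \{0,1\}$; by injectivity of $\phi$ this means exactly $m = 0$ or $m = +\infty$. Hence $m_2^{(1)} < m$ for every finite $m > 0$, and combining with the right half yields the strict chain $m_2^{(1)} < m < m_2^{(2)}$ on $(0,+\infty)$, so the whole chain can degenerate to equalities only at the two endpoints — precisely the claimed condition. I expect the only delicate points to be bookkeeping the reversed inequality direction when inverting the decreasing $\phi$, and treating the boundary values $m \in \{0,+\infty\}$ through the stated limits of $\phi$ rather than by direct substitution; the algebra itself is routine.

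As a sanity check tying the statement back to Proposition~\ref{proposition_capacity_LLRmean}: since symmetric capacity is increasing in the LLR mean, (\ref{LLR_inequlity}) is just the GA translation of the capacity ordering $I(W_2^{(1)}) \le I(W) \le I(W_2^{(2)})$ that underlies channel polarization, which matches Ar{\i}kan's basic polarization relationship and makes the extremal equality cases intuitive. I would nonetheless keep the algebraic argument above as the rigorous proof, since it depends only on the recursion (\ref{f1f2}) and the monotonicity of $\phi$, and thereby stays entirely within the exact-GA model without importing properties of the true (non-approximated) synthesized channels.
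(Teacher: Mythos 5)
Your proof is correct, but it takes a genuinely different route from the paper's. The paper disposes of this proposition in one line: it invokes Ar{\i}kan's Proposition~4 (the capacity ordering $I(W_2^{(1)}) \le I(W) \le I(W_2^{(2)})$ for the true synthesized channels) and then translates it into an ordering of LLR means via Proposition~\ref{proposition_capacity_LLRmean} --- essentially the argument you relegate to your final ``sanity check'' paragraph. You instead prove the inequality intrinsically within the GA model: reading $m_2^{(1)} = \phi^{-1}\bigl(1-(1-\phi(m))^2\bigr)$ and $m_2^{(2)} = 2m$ off the recursion (\ref{f1f2}), the right half is trivial and the left half reduces, after inverting the decreasing $\phi$, to $u(1-u)\ge 0$ for $u=\phi(m)\in[0,1]$, with the equality cases pinned down by $\phi(0)=1$ and $\phi(+\infty)=0$. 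Your route buys two things: it is self-contained (no appeal to properties of the non-approximated channels, which under GA are only surrogates, so the paper's transfer of Ar{\i}kan's capacity ordering into the GA model is itself a small leap of faith), and it is exactly the computation the paper later redoes for the approximation function $\Omega$ in Proposition~\ref{proposition_omega}, so it makes the parallel between the exact and approximate cases explicit. The paper's route buys brevity and a direct conceptual link to polarization. The only point to keep an eye on is that your argument needs $\phi$ to be \emph{strictly} decreasing on $(0,+\infty)$ so that $\phi^{-1}$ exists and the equality cases $u\in\{0,1\}$ correspond bijectively to $m\in\{0,+\infty\}$; this holds for the $\phi$ of \cite{ldpc_2GA} but is worth stating.
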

This result follows the \cite[Proposition 4]{arikan}. Combining with \emph{Proposition \ref{proposition_capacity_LLRmean}}, the proof of (\ref{LLR_inequlity}) is immediate. It can be seen from (\ref{LLR_inequlity}) that the reliability of the original channel $W$ is redistributed. Based on this interpretation, we may say that after one step polarization, a ``bad'' channel ${W_2^{(1)}}$ and a ``good'' channel ${W_2^{(2)}}$ have been created.

\begin{proposition}\label{proposition_omega}
   \emph{In the AGA construction of polar codes, the approximation function $\Omega \left( t \right)$ of $\phi \left( t \right)$ should satisfy
\begin{equation}\label{Theorem_1_Omega_satisfy}
  0 < \Omega \left( t \right) < 1,
\end{equation}
which is the necessary and sufficient condition for $\Omega \left( t \right)$ to satisfy \emph{Proposition 2}.
}
\end{proposition}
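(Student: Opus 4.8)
The plan is to translate \emph{Proposition~\ref{Proposition_LLRmean_relationship}} into a condition on $\Omega$ through the AGA recursion and then reduce it to an elementary algebraic inequality. First I would write out the size-two polarization in the AGA setting: replacing $\phi$ by $\Omega$ in (\ref{f1f2}), the two child means become $m_2^{(1)} = \Omega^{-1}\bigl(1 - (1-\Omega(m))^2\bigr)$ and $m_2^{(2)} = f_v(m) = 2m$. The right-hand inequality $m \le m_2^{(2)}$ then holds automatically for every $m \ge 0$, since $2m \ge m$ with equality exactly at $m = 0$ and (trivially) at $m = +\infty$; thus $\Omega$ plays no role in that half, and the whole burden falls on the check-node branch $m_2^{(1)} \le m$.

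Next I would exploit that any admissible approximation $\Omega$ must inherit the structural properties of $\phi$, namely continuity and strict monotone decrease on $[0,+\infty)$ with $\Omega(0)=1$ and $\Omega(+\infty)=0$, so that $\Omega^{-1}$ exists and is itself strictly decreasing. Setting $u = \Omega(m)$, the check-node inequality reads $\Omega^{-1}\bigl(1-(1-u)^2\bigr) \le \Omega^{-1}(u)$, and because $\Omega^{-1}$ is strictly decreasing this is equivalent to $1-(1-u)^2 \ge u$, i.e.\ to $u(1-u) \ge 0$. En route I would check that $0<u<1$ keeps the argument $1-(1-u)^2 = u(2-u)$ inside $(0,1)$, so that the composition $\Omega^{-1}\bigl(1-(1-u)^2\bigr)$ is well defined.

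The core step is then the elementary observation that $u(1-u) \ge 0$ holds precisely when $0 \le u \le 1$, and that it is strict exactly when $0 < u < 1$. Reading this back through $u = \Omega(m)$ settles both directions at once. For sufficiency, if $0 < \Omega(t) < 1$ on $(0,+\infty)$ then $m_2^{(1)} < m$ strictly, which together with $m < m_2^{(2)}$ reproduces the strict polarization of \emph{Proposition~\ref{Proposition_LLRmean_relationship}}, with equality confined to $m=0$ (where $u=1$) and $m=+\infty$ (where $u=0$). For necessity, if the proposition holds then the strict inequality $m_2^{(1)} < m$ forces $u(1-u) > 0$, i.e.\ $0 < \Omega(m) < 1$ for finite positive $m$.

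The main obstacle I anticipate is not the algebra but the bookkeeping around domain and boundary behaviour: I must argue that the range of $\Omega$ is such that $1-(1-\Omega(m))^2$ always lands in the domain of $\Omega^{-1}$, and treat the endpoints $m=0,+\infty$ carefully so that the ``equality if and only if'' clause is matched exactly. Making the monotonicity and continuity hypotheses on $\Omega$ explicit is precisely what lets the necessity direction go through, since without strict monotonicity the equivalence between $\Omega^{-1}\bigl(1-(1-u)^2\bigr) \le \Omega^{-1}(u)$ and $u(1-u) \ge 0$ could break down.
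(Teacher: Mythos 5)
Your proposal is correct and follows essentially the same route as the paper's proof: both reduce \emph{Proposition~\ref{Proposition_LLRmean_relationship}} to the check-node inequality $\Omega^{-1}\bigl(1-(1-\Omega(t))^2\bigr) < t < 2t$, invoke the assumed monotone decrease of $\Omega$ (hence of $\Omega^{-1}$) to strip the inverse, and land on the elementary equivalence $1-(1-u)^2 > u \iff 0 < u < 1$ with $u = \Omega(t)$. Your extra bookkeeping on the domain of $\Omega^{-1}$ and the boundary cases $m=0,+\infty$ is a minor refinement of the same argument, not a different approach.
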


\begin{proof}
    Different from that in LDPC codes, the approximation function $\Omega \left( t \right)$ for polar codes should guarantee that the relationship of (\ref{LLR_inequlity}) holds\footnote{In this paper, we analyze the error between AGA and EGA, rather than the error of GA itself.}. Therefore, in the size-two channel polarization, for $t \in \left( {0, + \infty } \right)$, $\Omega \left( t \right)$ should satisfy
    \begin{equation}\label{omega_satisfy}
       {{\Omega }^{ - 1}}\left( {1 - {{\left( {1 - \Omega \left( t \right)} \right)}^2}} \right) < t < 2t,
    \end{equation}
    which follows \emph{Proposition \ref{Proposition_LLRmean_relationship}}. Recall that $\phi \left( t \right)$ is continuous and monotonically decreasing on $\left[ {0, + \infty } \right)$ \cite{ldpc_2GA}, we therefore assume its approximated form $\Omega \left( t \right)$ monotonically decreases on $\left( {0, + \infty } \right)$. Consequently, the left inequality in (\ref{omega_satisfy}) can be simplified as
    \begin{equation}\label{omega_satisfy_simplified}
        1 - {\left( {1 - \Omega \left( t \right)} \right)^2} > \Omega \left( t \right) \Rightarrow 0 < \Omega \left( t \right) < 1,
    \end{equation}
    In turn, if $\Omega \left( t \right)$ satisfies $0 < \Omega \left( t \right) < 1$, we have
    \begin{equation}
      \begin{aligned}
        ~ & 1 - {\left( {1 - \Omega \left( t \right)} \right)^2} > 1 - \left( {1 - \Omega \left( t \right)} \right)\\
         \Rightarrow & {\Omega ^{ - 1}}\left( {1 - {{\left( {1 - \Omega \left( t \right)} \right)}^2}} \right) < \underbrace {{\Omega ^{ - 1}}\left( {1 - \left( {1 - \Omega \left( t \right)} \right)} \right)}_{ = t}\\
         \Rightarrow & {\Omega ^{ - 1}}\left( {1 - {{\left( {1 - \Omega \left( t \right)} \right)}^2}} \right) < t < 2t
        \end{aligned}
    \end{equation}
    The above analysis indicates that (\ref{Theorem_1_Omega_satisfy}) is the necessary and sufficient condition for $\Omega \left( t \right)$ to satisfy \emph{Proposition 2}.
\end{proof}

If $\Omega \left( t \right)$ cannot meet (\ref{Theorem_1_Omega_satisfy}), its approximation error with respect to the exact $\phi \left( t \right)$ will result in the following two types of reliability rank error:
\begin{enumerate}[Type 1)]
  \item In the size-two channel polarization, if $\Omega \left( t \right)$ leads to $m \le {m_2^{(1)} } < {m_2^{(2)} }$, this error indicates that the reliabilities of subchannels are partially violated, which is named as the ``polarization violation'' phenomenon.
  \item Furthermore, when $\Omega \left( t \right)$ leads to $m < {m_2^{(2)} } \le {m_2^{(1)} }$. This error indicates the reliabilities of subchannels have been wrongly reversed, which is named as the ``polarization reversal'' phenomenon.
\end{enumerate}

\begin{definition}\label{Definition_PVS}
\emph{Given the approximation function $\Omega \left( t \right)$, the polarization violation set (PVS) ${{\cal S}_{{\rm{PVS}}}}$ is defined as
\begin{equation}\label{Def1}
  {{\cal S}_{{\rm{PVS}}}} = \left\{ {t\left| {t \le {{\Omega }^{ - 1}}\left( {1 - {{\left( {1 - \Omega \left( t \right)} \right)}^2}} \right) < 2t} \right.} \right\},
\end{equation}
where $t \in \left( {0, + \infty } \right)$.
}
\end{definition}

Obviously, in the size-two channel polarization, for any LLR mean $m$ belonging to ${{\cal S}_{{\rm{PVS}}}}$, $\Omega \left( t \right)$ will certainly lead to $m \le {m_2^{(1)} } < {m_2^{(2)}}$, which violates the basic order in \emph{Proposition \ref{Proposition_LLRmean_relationship}}. Therefore, for the AGA algorithm with $\Omega \left( t \right)$, if ${{\cal S}_{{\rm{PVS}}}} \ne \varnothing$, any subchannel whose LLR mean belongs to ${{\cal S}_{{\rm{PVS}}}}$ will inaccurately create two ``good'' channel in the size-two channel polarization, which will lead to obvious approximation error in the subsequent AGA process.

\begin{definition}\label{Definition_PRS}
\emph{Given the approximation function $\Omega \left( t \right)$, the polarization reversal set (PRS) ${{\cal S}_{{\rm{PRS}}}}$ is defined as
\begin{equation}\label{Def2}
  {{\cal S}_{{\rm{PRS}}}} = \left\{ {t\left| {{{\Omega }^{ - 1}}\left( {1 - {{\left( {1 - \Omega \left( t \right)} \right)}^2}} \right) \ge 2t} \right.} \right\},
\end{equation}
where $t \in \left( {0, + \infty } \right)$.
}
\end{definition}

Interestingly, in the size-two channel polarization, for any LLR mean $m$ belonging to ${{\cal S}_{{\rm{PRS}}}}$, $\Omega \left( t \right)$ will result in $m < {m_2^{(2)} } \le {m_2^{(1)} }$. In other words, the split ``good" channel and ``bad'' channel swap their roles due to the calculation error of $\Omega \left( t \right)$, which yields severe error in the size-two polarization. This phenomenon then leads to the substantial error in subchannels' position rank\footnote{One should notice that we cannot have ${m_2^{(2)}} < m$ since ${m_2^{(2)}} = 2m$. There just exist three orders, which correspond to (\ref{LLR_inequlity}), PVS and PRS.}.

\begin{proposition}\label{proposition_PVS_PRS_relationship}
\emph{The relationship between PVS and PRS can be expressed as
\begin{equation}\label{PIS_PRS_Proposition}
  {{\cal S}_{{\rm{PRS}}}} \ne \varnothing \Rightarrow {{\cal S}_{{\rm{PVS}}}} \ne \varnothing.
\end{equation}
}
\end{proposition}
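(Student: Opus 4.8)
The plan is to argue by continuity together with the intermediate value theorem. First I would introduce the shorthand $g(t) = \Omega^{-1}\bigl(1 - (1 - \Omega(t))^2\bigr)$, which is precisely the quantity $m_2^{(1)}$ produced by the check-node map under the approximation $\Omega$ (compare \eqref{Def1} and \eqref{Def2}). In these terms the three admissible orderings at a given $t \in (0,+\infty)$ partition the half-line: the correct polarization regime is $g(t) < t$, the violation set is $\mathcal{S}_{\mathrm{PVS}} = \{t : 0 \le g(t) - t < t\}$, and the reversal set is $\mathcal{S}_{\mathrm{PRS}} = \{t : g(t) - t \ge t\}$. Accordingly I would study the single continuous auxiliary function $G(t) = g(t) - t$ and track the sign it takes on each region.

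Next I would record the elementary but crucial fact, already implicit in the proof of \emph{Proposition \ref{proposition_omega}}, that $g(t) \ge t \Leftrightarrow \Omega(t) \le 0$ or $\Omega(t) \ge 1$. This follows by setting $u = \Omega(t)$, using that $\Omega$ and hence $\Omega^{-1}$ are decreasing to rewrite $g(t) \ge t$ as $1 - (1-u)^2 \le u$, and noting this is equivalent to $u(1-u) \le 0$. In particular $G(t) = 0$ exactly when $\Omega(t) \in \{0,1\}$, and at such a point $g(t) = t$, so that $t \le g(t) = t < 2t$, i.e. $t \in \mathcal{S}_{\mathrm{PVS}}$. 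Thus \emph{any zero of $G$ automatically lies in the violation set}, which is the pivot of the whole argument.

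I would then establish the implication directly. Assume $\mathcal{S}_{\mathrm{PRS}} \neq \varnothing$ and pick $t_0$ with $G(t_0) = g(t_0) - t_0 \ge t_0 > 0$. Because $\Omega$ is a genuine approximation of $\phi$, which satisfies $0 < \phi(t) < 1$ on $(0,+\infty)$ and $\phi(t) \to 0^+$ as $t \to +\infty$, there is a value $t_1$ at which $0 < \Omega(t_1) < 1$ and hence $G(t_1) < 0$; that is, the correct-polarization region is nonempty. Since $G$ is continuous on the interval joining $t_0$ and $t_1$, the intermediate value theorem produces $t^*$ with $G(t^*) = 0$, and by the previous paragraph $t^* \in \mathcal{S}_{\mathrm{PVS}}$, so $\mathcal{S}_{\mathrm{PVS}} \neq \varnothing$, which is \eqref{PIS_PRS_Proposition}.

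The main obstacle is not the topology but the regularity bookkeeping that makes the intermediate value theorem applicable: I must guarantee that $g$, and therefore $G$, is well defined and continuous on the whole interval connecting $t_1$ to $t_0$. This needs $\Omega$ continuous and strictly monotone and, more delicately, the argument $1 - (1 - \Omega(t))^2$ to remain inside the range on which $\Omega^{-1}$ is defined; when $\Omega$ overshoots far outside $[0,1]$ this can fail. I would therefore restrict the intermediate value step to a subinterval on which $\Omega$ crosses the value $0$ or $1$ monotonically, since that crossing is exactly where $G$ changes sign and is all that the argument requires. The secondary point deserving care is the nonemptiness of the correct-polarization region, which I would justify rigorously from the defining property that $\Omega$ tracks $\phi \in (0,1)$ for large $t$, rather than leaving it as folklore.
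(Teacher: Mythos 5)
Your argument is correct, and it is a genuinely different---and in fact more complete---route than the one the paper takes. The paper's proof observes only that any $t\in{\cal S}_{{\rm{PRS}}}$ satisfies ${\Omega ^{ - 1}}( {1 - {{( {1 - \Omega ( t )} )}^2}} ) \ge 2t \ge t$ and then declares the claim ``immediate'' from \emph{Definition \ref{Definition_PVS}}; read literally, that does not produce a point of ${\cal S}_{{\rm{PVS}}}$, because the PVS condition also demands the strict upper bound $g(t)<2t$, which a PRS point violates by definition (the two sets are disjoint). What is actually needed is exactly your boundary-crossing step: a PRS point forces $\Omega(t_0)\ge 1$ (given $\Omega>0$ and monotonicity), the large-$t$ behaviour of a sensible approximation forces $\Omega(t_1)<1$ somewhere, and continuity yields a crossing $t^*$ with $\Omega(t^*)=1$, at which $g(t^*)=t^*$ and hence $t^*\in{\cal S}_{{\rm{PVS}}}$. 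Your two caveats are also substantive rather than cosmetic: working with $\Omega$ directly (locating the crossing of the value $1$) rather than with $G=g-\mathrm{id}$ neatly sidesteps the domain issue for $\Omega^{-1}$ when $\Omega$ overshoots past $2$, and the hypothesis that $\Omega$ eventually drops below $1$ is genuinely necessary---for a decreasing $\Omega$ with $\Omega(t)>1$ everywhere (e.g. $\Omega(t)=1+e^{-t}$) one gets ${\cal S}_{{\rm{PRS}}}=(0,+\infty)$ while ${\cal S}_{{\rm{PVS}}}=\varnothing$, so the implication fails without it. In short, your proof buys rigor and exposes the implicit hypotheses; the paper's buys brevity at the cost of a real gap that your intermediate-value argument closes.
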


\begin{proof}
     Note that when $t \in \left( {0, + \infty } \right)$, for any $t \in {{\cal S}_{{\rm{PRS}}}}$, it makes ${{\Omega }^{ - 1}}( {1 - {{\left( {1 - \Omega \left( t \right)} \right)}^2}} ) \ge 2t$ by \emph{Definition \ref{Definition_PRS}}, which will inevitably result in ${{\Omega }^{ - 1}}( {1 - {{\left( {1 - \Omega \left( t \right)} \right)}^2}} ) \ge t$. Therefore, according to \emph{Definition \ref{Definition_PVS}}, the proof of \emph{Proposition \ref{proposition_PVS_PRS_relationship}} is immediate. In other words, ${{\cal S}_{{\rm{PRS}}}} \ne \varnothing$ is the sufficient condition of ${{\cal S}_{{\rm{PVS}}}} \ne \varnothing$. On the contrary, if ${{\cal S}_{{\rm{PVS}}}} = \varnothing$, we can derive ${{\cal S}_{{\rm{PRS}}}} = \varnothing$.
\end{proof}

Suppose $\Omega \left( t \right)$ monotonically decreases on $\left( {0, + \infty } \right)$, the left inequality in (\ref{Def1}) can therefore be simplified as
\begin{equation}\label{Def1_aid}
  \Omega \left( t \right) \ge 1 - {\left( {1 - \Omega \left( t \right)} \right)^2} \Rightarrow \Omega \left( t \right) \ge 1.
\end{equation}
Analogously, the inequality in (\ref{Def2}) will also be simplified as
\begin{equation}\label{Def2_aid}
  1 - {\left( {1 - \Omega \left( t \right)} \right)^2} \le \Omega \left( {2t} \right) \Rightarrow 2\Omega \left( t \right) - \Omega {\left( t \right)^2} \le \Omega \left( {2t} \right).
\end{equation}

\emph{Example:} recall that in Chung's conventional AGA scheme, the two-segment approximation function ${\varphi}\left( t \right)$ is a specific form of $\Omega \left( t \right)$. Since $\varphi \left( 0 \right) = {e^{0.0218}} > 1$, ${\varphi}\left( t \right)$ cannot satisfy (\ref{Theorem_1_Omega_satisfy}). For ${\varphi}\left( t \right)$, its corresponding PVS and PRS are denoted by ${{\cal S}_{{\rm{PVS}}}} = \left( {{a_1},{a_2}} \right]$ and ${{\cal S}_{{\rm{PRS}}}} = \left( {{0},{a_1}} \right]$, respectively. The boundary points $a_1$ and $a_2$ are given in the following equations
\begin{equation}\label{boun}
  \left\{ \begin{array}{l}
    2\varphi \left( {{a_1}} \right) - \varphi {\left( {{a_1}} \right)^2} = \varphi \left( {2{a_1}} \right),\\
    \varphi \left( {{a_2}} \right) = 1,
    \end{array} \right. \Rightarrow \left\{ \begin{array}{l}
    {a_1} = 0.01476,\\
    {a_2} = 0.02939,
    \end{array} \right.
\end{equation}
which follows (\ref{Def1_aid}) and (\ref{Def2_aid}). Hence for ${\varphi}\left( t \right)$ we have
\begin{equation}\label{AGA_2_PRS}
  {{\cal S}_{{\rm{PVS}}}} = \left( {0.01476,0.02939} \right],{{\cal S}_{{\rm{PRS}}}} = \left( {0,0.01476} \right],
\end{equation}
which are denoted in Fig. \ref{phi_func}.

\begin{figure}[t]
\vspace{0.4em}
\setlength{\abovecaptionskip}{0.cm}
\setlength{\belowcaptionskip}{-0.cm}
  \centering{\includegraphics[scale=0.74]{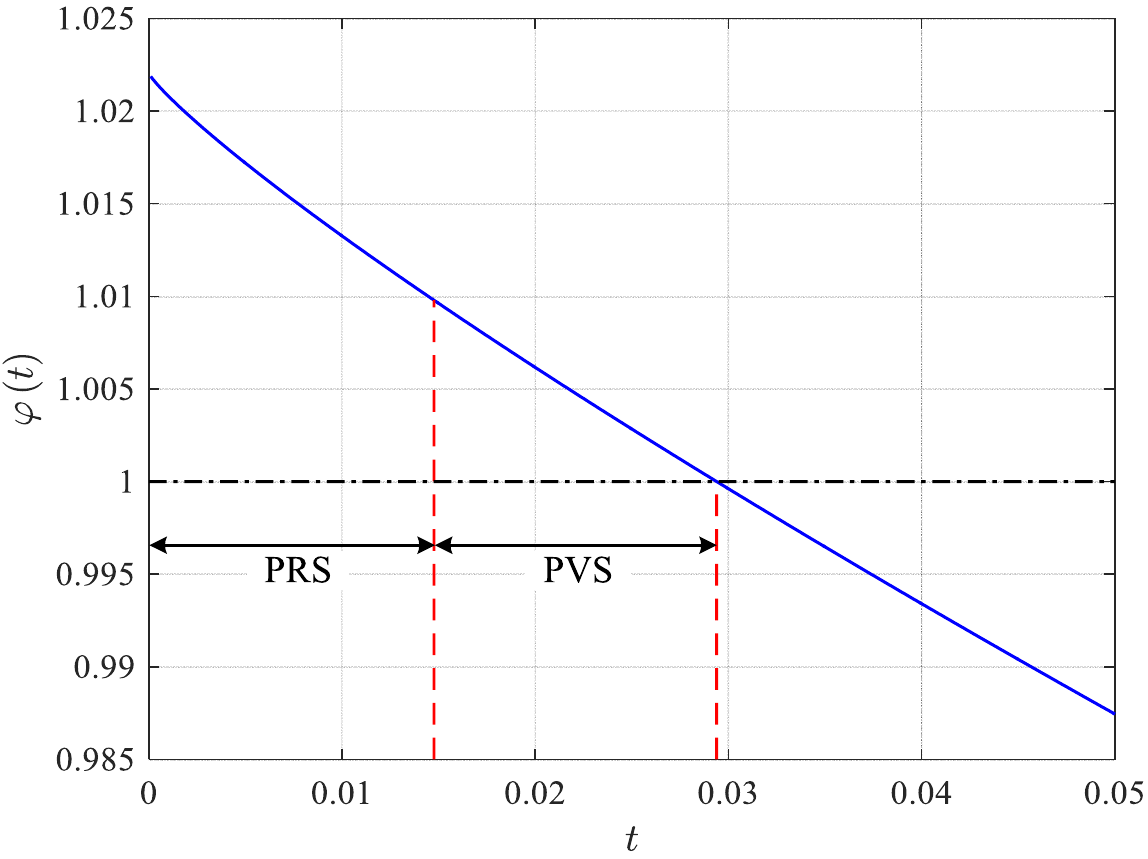}}
  \caption{Schematic plot of ${\varphi}\left( t \right)$, where ${{\cal S}_{{\rm{PVS}}}} = \left( {0.01476,0.02939} \right]$ and ${{\cal S}_{{\rm{PRS}}}} = \left( {0,0.01476} \right]$.}\label{phi_func}
  \vspace{-0.5em}
\end{figure}

\begin{theorem}\label{Theorem_2}
\emph{For the $N$-channel transform, where $N = 2^n$, $n \ge 1$, suppose that the original channel's LLR mean has two configurations, which are denoted by $\mathord{\buildrel{\lower3pt\hbox{$\scriptscriptstyle\frown$}}
\over m} _0^{\left( 1 \right)}$ and $\mathord{\buildrel{\lower3pt\hbox{$\scriptscriptstyle\smile$}}
\over m} _0^{\left( 1 \right)}$. If they satisfy $\mathord{\buildrel{\lower3pt\hbox{$\scriptscriptstyle\frown$}}
\over m} _0^{\left( 1 \right)} \ge \mathord{\buildrel{\lower3pt\hbox{$\scriptscriptstyle\smile$}}
\over m} _0^{\left( 1 \right)}$, then under the GA assumption, for $j = 1,2, \cdots ,{2^n}$, we have
\begin{equation}\label{Theorem1}
  \mathord{\buildrel{\lower3pt\hbox{$\scriptscriptstyle\frown$}}
    \over m} _n^{\left( j \right)} \ge \mathord{\buildrel{\lower3pt\hbox{$\scriptscriptstyle\smile$}}
    \over m} _n^{\left( j \right)}.
\end{equation}
}
\end{theorem}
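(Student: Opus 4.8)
The plan is to prove the statement by induction on the depth $i$ of the code tree, establishing the stronger claim that $\mathord{\buildrel{\lower3pt\hbox{$\scriptscriptstyle\frown$}}\over m}_i^{(j)} \ge \mathord{\buildrel{\lower3pt\hbox{$\scriptscriptstyle\smile$}}\over m}_i^{(j)}$ holds simultaneously for every node $j = 1, 2, \ldots, 2^i$ at each depth $i = 0, 1, \ldots, n$; the theorem is then the special case $i = n$. The entire argument reduces to a single structural fact: both per-step update maps in (\ref{f1f2}), namely the check-node map $f_c$ and the variable-node map $f_v$, are monotonically nondecreasing in their argument $t$. Granting this, the comparison propagates cleanly down the tree, since each mean at depth $i+1$ is obtained from a single parent mean at depth $i$ by applying one of these two maps.

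The key step, and really the only one carrying content, is to verify monotonicity of the update maps. For $f_v(t) = 2t$ this is immediate. For $f_c(t) = \phi^{-1}(1 - (1 - \phi(t))^2)$ I would write it as the composition $f_c = \phi^{-1} \circ g \circ \phi$ with $g(u) = 1 - (1 - u)^2 = u(2 - u)$, and track monotonicity through each factor. Recall that $\phi$ is continuous and strictly decreasing on $[0, +\infty)$ with $\phi(0) = 1$ and $\phi(+\infty) = 0$, so $\phi$ maps $(0, +\infty)$ onto $(0, 1)$ and its inverse $\phi^{-1}$ is likewise strictly decreasing; and on $(0, 1)$ one has $g'(u) = 2(1 - u) > 0$, so $g$ is strictly increasing there. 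Hence for $t_1 > t_2 > 0$ we get $\phi(t_1) < \phi(t_2)$, then $g(\phi(t_1)) < g(\phi(t_2))$, and finally $\phi^{-1}(g(\phi(t_1))) > \phi^{-1}(g(\phi(t_2)))$; that is, $f_c(t_1) > f_c(t_2)$. Thus $f_c$ is strictly increasing — the two order-reversals contributed by $\phi$ and $\phi^{-1}$ cancel while the inner map $g$ preserves order. As a sanity check, the same composition also shows $f_c(t) < t$, consistent with \emph{Proposition \ref{Proposition_LLRmean_relationship}}.

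With monotonicity in hand, the induction runs as follows. The base case $i = 0$ is exactly the hypothesis $\mathord{\buildrel{\lower3pt\hbox{$\scriptscriptstyle\frown$}}\over m}_0^{(1)} \ge \mathord{\buildrel{\lower3pt\hbox{$\scriptscriptstyle\smile$}}\over m}_0^{(1)}$. For the inductive step, assume the inequality holds for all nodes at depth $i$. Every index at depth $i+1$ has the form $2j-1$ or $2j$ for a unique parent $j \in \{1, \ldots, 2^i\}$, with children $m_{i+1}^{(2j-1)} = f_c(m_i^{(j)})$ and $m_{i+1}^{(2j)} = f_v(m_i^{(j)})$. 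Applying the monotonicity of $f_c$ and $f_v$ to the inductive hypothesis $\mathord{\buildrel{\lower3pt\hbox{$\scriptscriptstyle\frown$}}\over m}_i^{(j)} \ge \mathord{\buildrel{\lower3pt\hbox{$\scriptscriptstyle\smile$}}\over m}_i^{(j)}$ yields the corresponding inequalities for both children, which together cover all nodes at depth $i+1$; setting $i = n$ completes the proof. I expect no genuine obstacle, the only delicate point being the correct phrasing of the monotonicity of $f_c$, since it is a composition of a decreasing, an increasing, and a decreasing map. I would finally remark that the argument uses nothing about $\phi$ beyond its monotonicity, so the same conclusion holds verbatim when any monotonically decreasing approximation $\Omega$ replaces $\phi$, and that the bit-reversal permutation plays no role because the comparison is made index-by-index under an identical tree structure.
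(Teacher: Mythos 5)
Your proof is correct and follows essentially the same route as the paper's: induction on the depth of the code tree, driven by the monotonicity of $f_c$ and $f_v$. The only difference is that you spell out the monotonicity of $f_c$ via the decomposition $\phi^{-1}\circ g\circ\phi$ with $g(u)=1-(1-u)^2$, a step the paper leaves as "one can easily check."
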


\begin{proof}
  This result will be proved by mathematical induction. Under the GA assumption, recall that $\phi \left( t \right)$ is continuous and monotonically decreasing on $\left[ {0, + \infty } \right)$ \cite{ldpc_2GA}, one can easily check that ${f_c}\left( t \right)$ and ${f_v}\left( t \right)$ in (\ref{f1f2}) monotonously increase on $\left[ {0, + \infty } \right)$.

  Suppose $N = 2^k$, $k = 1$, if the two LLR mean configurations of the original channel satisfy $\mathord{\buildrel{\lower3pt\hbox{$\scriptscriptstyle\frown$}}
\over m} _0^{\left( 1 \right)} \ge \mathord{\buildrel{\lower3pt\hbox{$\scriptscriptstyle\smile$}}
\over m} _0^{\left( 1 \right)}$, then
\begin{equation}
  \left\{ \begin{array}{l}
    {f_c}\left( {\mathord{\buildrel{\lower3pt\hbox{$\scriptscriptstyle\frown$}}
    \over m} _0^{\left( 1 \right)}} \right) \ge {f_c}\left( {\mathord{\buildrel{\lower3pt\hbox{$\scriptscriptstyle\smile$}}
    \over m} _0^{\left( 1 \right)}} \right),\\
    {f_v}\left( {\mathord{\buildrel{\lower3pt\hbox{$\scriptscriptstyle\frown$}}
    \over m} _0^{\left( 1 \right)}} \right) \ge {f_v}\left( {\mathord{\buildrel{\lower3pt\hbox{$\scriptscriptstyle\smile$}}
    \over m} _0^{\left( 1 \right)}} \right),
    \end{array} \right. \Rightarrow \left\{ \begin{array}{l}
    \mathord{\buildrel{\lower3pt\hbox{$\scriptscriptstyle\frown$}}
    \over m} _1^{\left( 1 \right)} \ge \mathord{\buildrel{\lower3pt\hbox{$\scriptscriptstyle\smile$}}
    \over m} _1^{\left( 1 \right)},\\
    \mathord{\buildrel{\lower3pt\hbox{$\scriptscriptstyle\frown$}}
    \over m} _1^{\left( 2 \right)} \ge \mathord{\buildrel{\lower3pt\hbox{$\scriptscriptstyle\smile$}}
    \over m} _1^{\left( 2 \right)}.
    \end{array} \right.
\end{equation}
With the increased LLR mean of the original channel $W$, it can be seen that the LLR means of two polarized subchannels will strictly increase.

Next, suppose $N = 2^k$, if we have $\mathord{\buildrel{\lower3pt\hbox{$\scriptscriptstyle\frown$}}
\over m} _0^{\left( 1 \right)} \ge \mathord{\buildrel{\lower3pt\hbox{$\scriptscriptstyle\smile$}}
\over m} _0^{\left( 1 \right)}$, then for $j = 1,2, \cdots ,{2^k}$, $\mathord{\buildrel{\lower3pt\hbox{$\scriptscriptstyle\frown$}}
    \over m} _k^{\left( j \right)} \ge \mathord{\buildrel{\lower3pt\hbox{$\scriptscriptstyle\smile$}}
    \over m} _k^{\left( j \right)}$ holds. Thus, when $N = 2^{k+1}$, one can check
\begin{equation}
  \left\{ \begin{array}{l}
    {f_c}\left( {\mathord{\buildrel{\lower3pt\hbox{$\scriptscriptstyle\frown$}}
    \over m} _k^{\left( j \right)}} \right) \ge {f_c}\left( {\mathord{\buildrel{\lower3pt\hbox{$\scriptscriptstyle\smile$}}
    \over m} _k^{\left( j \right)}} \right),\\
    {f_v}\left( {\mathord{\buildrel{\lower3pt\hbox{$\scriptscriptstyle\frown$}}
    \over m} _k^{\left( j \right)}} \right) \ge {f_v}\left( {\mathord{\buildrel{\lower3pt\hbox{$\scriptscriptstyle\smile$}}
    \over m} _k^{\left( j \right)}} \right),
    \end{array} \right. \Rightarrow \left\{ \begin{array}{l}
    \mathord{\buildrel{\lower3pt\hbox{$\scriptscriptstyle\frown$}}
    \over m} _{k + 1}^{\left( {2j - 1} \right)} \ge \mathord{\buildrel{\lower3pt\hbox{$\scriptscriptstyle\smile$}}
    \over m} _{k + 1}^{\left( {2j - 1} \right)},\\
    \mathord{\buildrel{\lower3pt\hbox{$\scriptscriptstyle\frown$}}
    \over m} _{k + 1}^{\left( {2j} \right)} \ge \mathord{\buildrel{\lower3pt\hbox{$\scriptscriptstyle\smile$}}
    \over m} _{k + 1}^{\left( {2j} \right)}.
    \end{array} \right.
\end{equation}
In other words, for $j = 1,2, \cdots ,{2^{k+1}}$, $\mathord{\buildrel{\lower3pt\hbox{$\scriptscriptstyle\frown$}}
    \over m} _{k+1}^{\left( j \right)} \ge \mathord{\buildrel{\lower3pt\hbox{$\scriptscriptstyle\smile$}}
    \over m} _{k+1}^{\left( j \right)}$. From above analysis, the proof of (\ref{Theorem1}) is finished.
\end{proof}

\emph{Theorem \ref{Theorem_2}} indicates that under the GA assumption, if the LLR mean of the original channel $W$ increases, the polarized subchannel's LLR mean will increase together. Combining PVS and PRS, \emph{Theorem \ref{Theorem_2}} serves to analyze the subchannel's rank error and the poor performance of the long-length polar codes when Chung's conventional AGA is used.

For the AGA construction of polar codes with $\Omega \left( t \right)$, suppose ${{\cal S}_{{\rm{PRS}}}} \ne \varnothing$, then for any LLR mean $m_i^{\left( j \right)} \in {{\cal S}_{{\rm{PRS}}}}$, we have $m_{i + 1}^{\left( {2j - 1} \right)} \ge m_{i + 1}^{\left( {2j} \right)}$ by the definition of ${{\cal S}_{{\rm{PRS}}}}$ in the recursive calculation of GA. However, $\phi \left( t \right)$ in EGA can guarantee ${{\cal S}_{{\rm{PRS}}}} = \varnothing$ and ${{\cal S}_{{\rm{PVS}}}} = \varnothing$, which claims $m_{i + 1}^{\left( {2j - 1} \right)} < m_{i + 1}^{\left( {2j} \right)}$ for any $m_i^{\left( j \right)} > 0$. The above analysis indicates that in the AGA process, as presented in the code tree, due to ${{\cal S}_{{\rm{PRS}}}} \ne \varnothing$, the condition $m_i^{\left( j \right)} \in {{\cal S}_{{\rm{PRS}}}}$ will lead to the rank error among the leaf nodes which belong to the left subtree ${{\cal T}_{{\rm{left}}}} = {\cal T}(v_{i + 1}^{( {2j - 1} )})$ and the right subtree ${{\cal T}_{{\rm{right}}}} = {\cal T}(v_{i + 1}^{( {2j} )})$. In other words, we have
\begin{equation}\label{leaf_LLRmean}
  m_n^{\left( {\left( {j - 1} \right){2^{n - i}} + s} \right)} \ge m_n^{\left( {\left( {j - 1} \right){2^{n - i}} + s + {2^{n - i - 1}}} \right)},
\end{equation}
where $s = 1,2, \cdots ,{2^{n - i - 1}}$. This result follows from \emph{Theorem \ref{Theorem_2}} by mechanically applying $m_{i + 1}^{\left( {2j - 1} \right)} \Rightarrow \mathord{\buildrel{\lower3pt\hbox{$\scriptscriptstyle\frown$}}
\over m} _0^{\left( 1 \right)}$ and $m_{i + 1}^{\left( {2j} \right)} \Rightarrow \mathord{\buildrel{\lower3pt\hbox{$\scriptscriptstyle\smile$}}
\over m} _0^{\left( 1 \right)}$. Then we have $\mathord{\buildrel{\lower3pt\hbox{$\scriptscriptstyle\frown$}}
\over m} _{n - i - 1}^{\left( s \right)} \ge \mathord{\buildrel{\lower3pt\hbox{$\scriptscriptstyle\smile$}}
\over m} _{n - i - 1}^{\left( s \right)}$. Note that in these two subtrees, $\mathord{\buildrel{\lower3pt\hbox{$\scriptscriptstyle\frown$}}
\over m} _{n - i - 1}^{\left( s \right)}$ and $\mathord{\buildrel{\lower3pt\hbox{$\scriptscriptstyle\smile$}}
\over m} _{n - i - 1}^{\left( s \right)}$ correspond to the left and right side in (\ref{leaf_LLRmean}), respectively. However, in EGA, the ``$\ge$'' in (\ref{leaf_LLRmean}) should be ``$<$''. This polarization reversal phenomenon leads to the rank error of subchannel's position, which directly affects the selection of information set $\cal A$.

\begin{figure}[t]
\vspace{0.3em}
\setlength{\abovecaptionskip}{0.cm}
\setlength{\belowcaptionskip}{-0.cm}
  \centering{\includegraphics[scale=0.423]{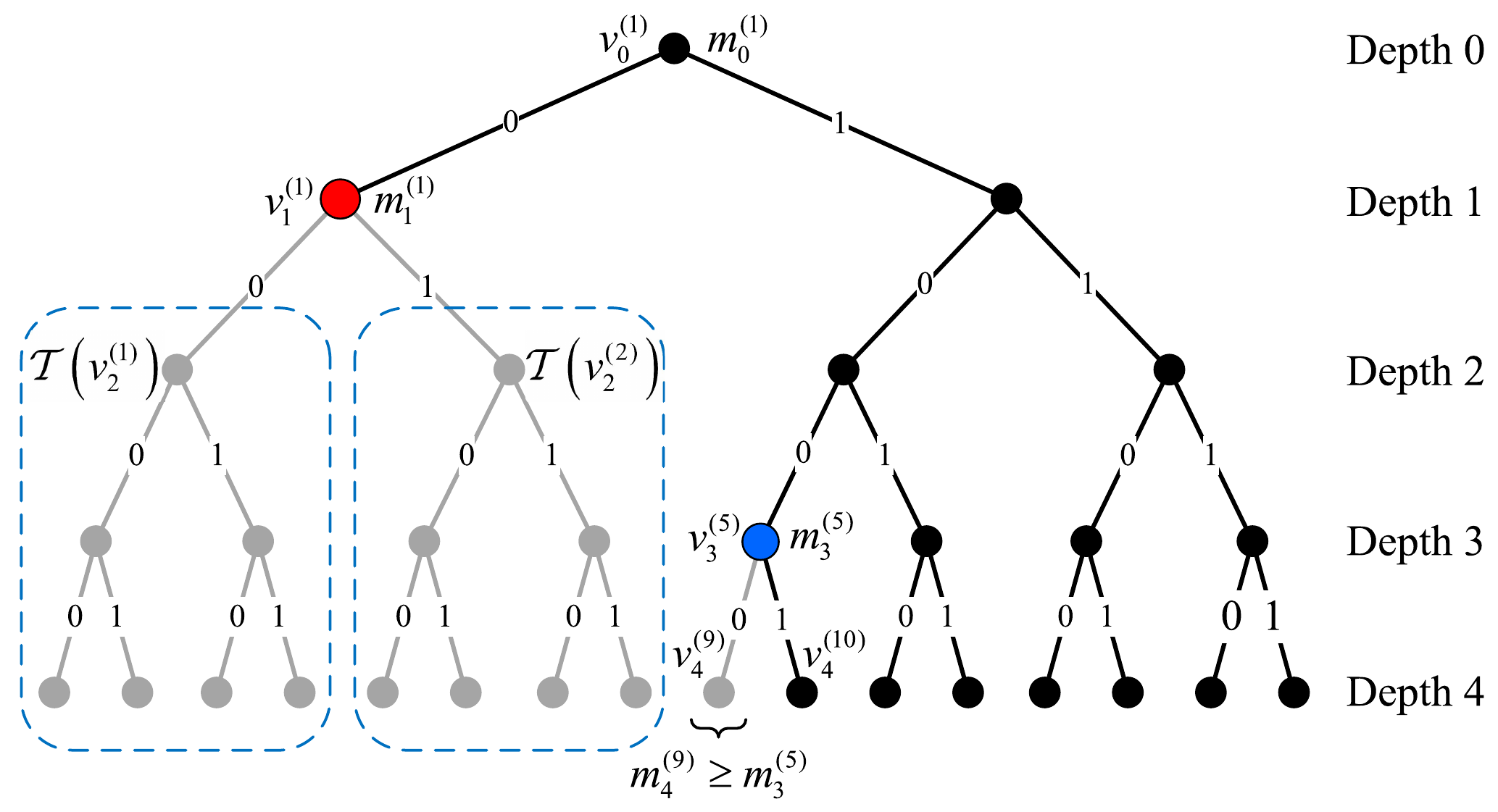}}
  \caption{A code tree representation of polarization violation and polarization reversal, where $N = 16$ and $n = 4$. $m_1^{\left( 1 \right)} \in {{\cal S}_{{\rm{PRS}}}}$ and $m_3^{\left( 5 \right)} \in {{\cal S}_{{\rm{PVS}}}}$.}\label{polar_codetree_PRS}
\vspace{-1em}
\end{figure}

As an example, Fig. \ref{polar_codetree_PRS} shows the polarization violation and polarization reversal in the code tree, where $N = 16$ and $n = 4$. In the AGA computation process, $m_1^{\left( 1 \right)} \in {{\cal S}_{{\rm{PRS}}}}$. Then we have $m_2^{\left( 1 \right)} \ge m_2^{\left( 2 \right)}$, for the leaf nodes with ${{\cal T}_{{\rm{left}}}} = {\cal T}(v_{2}^{\left( {1} \right)})$ and ${{\cal T}_{{\rm{right}}}} = {\cal T}(v_{2}^{\left( {2} \right)})$, following \emph{Theorem 2}, one can check
\begin{equation}\label{example2}
  \begin{array}{l}
    m_4^{\left( 1 \right)} \ge m_4^{\left( 5 \right)},m_4^{\left( 2 \right)} \ge m_4^{\left( 6 \right)},\\
    m_4^{\left( 3 \right)} \ge m_4^{\left( 7 \right)},m_4^{\left( 4 \right)} \ge m_4^{\left( 8 \right)}.
    \end{array}
\end{equation}
Similarly, since $m_3^{\left( 5 \right)} \in {{\cal S}_{{\rm{PVS}}}}$, we have $m_4^{\left( 9 \right)} \ge m_3^{\left( {5} \right)}$, whereas in EGA, it should be $m_4^{\left( 9 \right)} < m_3^{\left( {5} \right)}$, and all the ``$\ge$'' in (\ref{example2}) should be ``$<$''. Thus, compared to EGA, the approximation error of AGA results in obvious rank error within the $16$ leaf nodes.

The above analysis indicates that in the AGA construction of polar codes, if ${{\cal S}_{{\rm{PRS}}}} \ne \varnothing$, its approximation error leads to polarization reversal. Considering its definition, ${{\cal S}_{{\rm{PRS}}}}$ lies in the vicinity of $0$. As stated in \cite{arikan}, when $N$ tends to infinity, the symmetric capacity terms $\{ I(W_N^{(j)}) \} $ cluster around $0$ and $1$, and the corresponding LLR means cluster around $0$ and $ + \infty $. So when the code length becomes longer, there are more LLR means falling in ${{\cal S}_{{\rm{PRS}}}}$ during the AGA recursive computation process. This is the essential reason that polar codes constructed by some AGA suffer from catastrophic performance loss with the long code lengths.

During the recursive process of AGA, assuming the number of code tree nodes belonging to the two sets\footnote{The leaf nodes whose LLR mean falls into the two sets are not counted in, because it will not lead to rank error among the descendants.} are denoted by ${\mu _{{\rm{PVS}}}}$ and ${\mu _{{\rm{PRS}}}}$, the corresponding ratios with respect to all nodes are written as
\begin{equation}\label{Percentage_PVS_PRS}
  {\theta _{{\rm{PVS}}}} = \frac{{{\mu _{{\rm{PVS}}}}}}{{\sum\limits_{k = 0}^{n - 1} {{2^k}} }},~{\theta _{{\rm{PRS}}}} = \frac{{{\mu _{{\rm{PRS}}}}}}{{\sum\limits_{k = 0}^{n - 1} {{2^k}} }}.
\end{equation}
For Chung's conventional AGA, Table \ref{table_AGA} gives the distribution of ${\mu _{{\rm{PVS}}}}$ and ${\mu _{{\rm{PRS}}}}$ with different polarization levels $n$. The corresponding ${\theta _{{\rm{PVS}}}}$ and ${\theta _{{\rm{PRS}}}}$ are also listed in Table \ref{table_AGA}, where ${E_b}/{N_0} = 1{\rm{dB}}$ (${\sigma ^2} = 1.1915$). From Table \ref{table_AGA}, we observe that with the increase of code length, there are more and more nodes falling in PRS, and its corresponding ratio also becomes larger. Therefore, the polar codes constructed by Chung's conventional AGA suffer from catastrophic jitter in performance when their code lengths are long.

\begin{table}[htbp]
\renewcommand\arraystretch{1.3}
\setlength{\abovecaptionskip}{0.cm}
\setlength{\belowcaptionskip}{-0.cm}
  \centering
  \caption{The number of nodes and its percentage whose LLR mean belongs to ${{\cal S}_{{\rm{PVS}}}}$ and ${{\cal S}_{{\rm{PRS}}}}$ during the recursive process of Chung's conventional AGA, where ${E_b}/{N_0} = 1{\rm{dB}}$.}\label{table_AGA}
  \begin{tabular}{|p{1cm}|p{1.3cm}|p{1.3cm}|p{1.3cm}|p{1.3cm}|}
    \hline
    \centering $n$ & \centering ${\mu _{{\rm{PVS}}}}$ & \centering ${\theta _{{\rm{PVS}}}}$ & \centering ${\mu _{{\rm{PRS}}}}$ & \centering ${\theta _{{\rm{PRS}}}}$\tabularnewline
    \hline
    \hline
    \centering $10$ & \centering $40$ & \centering $3.910\%$ & \centering $33$ & \centering $3.226\%$\tabularnewline
    \hline
    \centering $11$ & \centering $89$ & \centering $4.348\%$ & \centering $88$ & \centering $4.299\%$\tabularnewline
    \hline
    \centering $12$ & \centering $191$ & \centering $4.664\%$ & \centering $225$ & \centering $5.495\%$\tabularnewline
    \hline
    \centering $13$ & \centering $394$ & \centering $4.810\%$ & \centering $549$ & \centering $6.702\%$\tabularnewline
    \hline
    \centering $14$ & \centering $803$ & \centering $4.901\%$ & \centering $1297$ & \centering $7.917\%$\tabularnewline
    \hline
    \centering $15$ & \centering $1617$ & \centering $4.935\%$ & \centering $3003$ & \centering $9.165\%$\tabularnewline
    \hline
    \centering $16$ & \centering $3280$ & \centering $5.005\%$ & \centering $6820$ & \centering $10.41\%$\tabularnewline
    \hline
    \centering $17$ & \centering $6340$ & \centering $4.837\%$ & \centering $15240$ & \centering $11.63\%$\tabularnewline
    \hline
    \centering $18$ & \centering $12528$ & \centering $4.779\%$ & \centering $33646$ & \centering $12.83\%$\tabularnewline
    \hline
    \centering $19$ & \centering $24550$ & \centering $4.683\%$ & \centering $73503$ & \centering $14.02\%$\tabularnewline
    \hline
    \centering $20$ & \centering $48036$ & \centering $4.581\%$ & \centering $159132$ & \centering $15.18\%$\tabularnewline
    \hline
  \end{tabular}
\end{table}

Fig. \ref{BLER_1} demonstrates the BLER performance of polar codes constructed by Chung's conventional AGA and Tal\&Vardy's method under the BI-AWGN channels. In Fig. \ref{BLER_1}, polar codes are constructed depending on the signal-to-noise ratio (SNR) one by one, and all the schemes have code rate $R = 1/3$ with SC decoding. The code length $N$ is set to be $2^{12}$, $2^{14}$ and $2^{18}$. We observe that for long code lengths, Chung's scheme obviously presents catastrophic performance loss. It is consistent with the analysis of Table \ref{table_AGA}.

\begin{figure}[htbp]
\vspace{0.4em}
\setlength{\abovecaptionskip}{0.cm}
\setlength{\belowcaptionskip}{-0.cm}
  \centering{\includegraphics[scale=0.61]{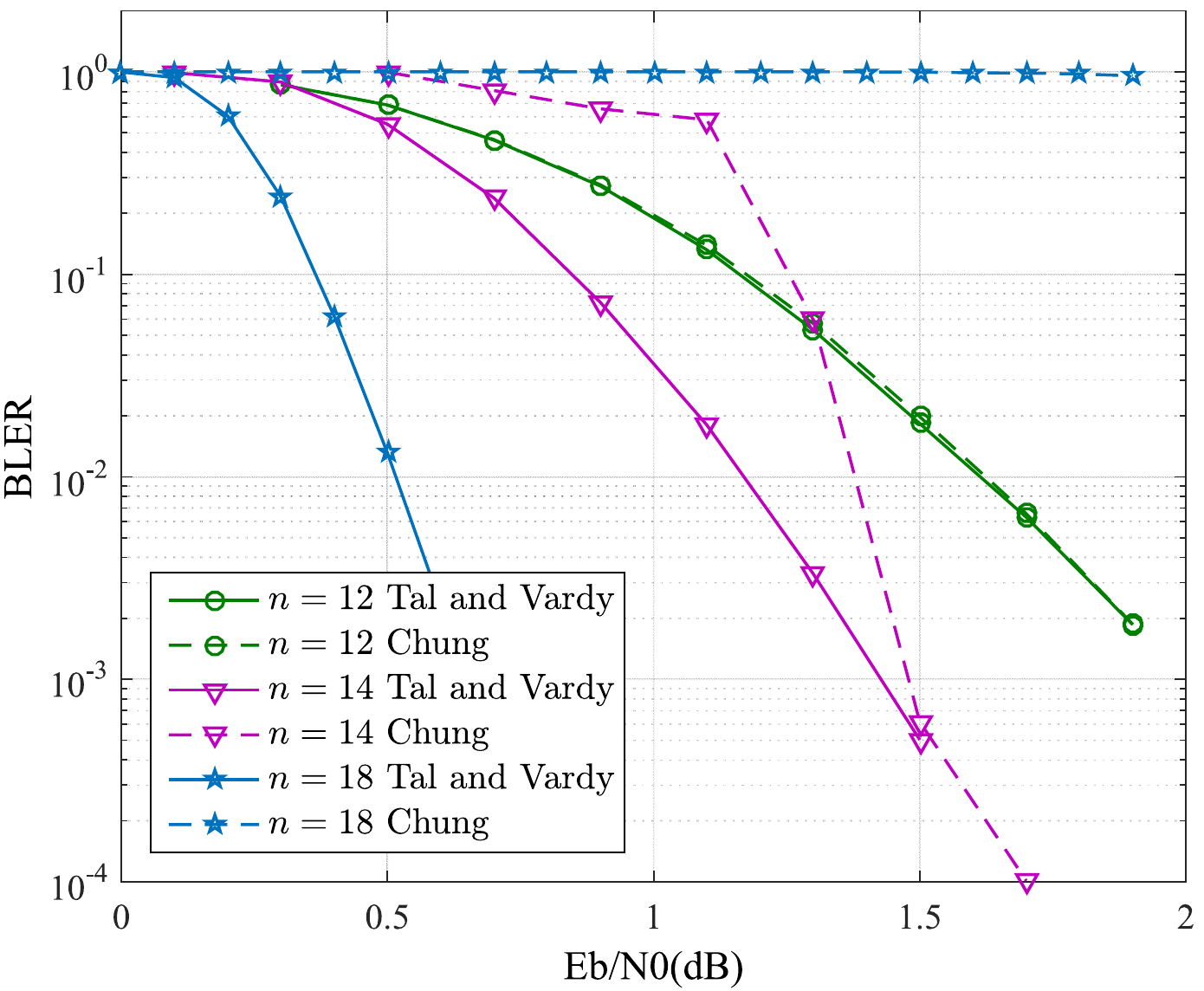}}
  \caption{BLER performance comparison of polar codes with the code length $N = 2^n$ ($n = 12,14,18$) and code rate $R = 1/3$ in the AWGN channel.}\label{BLER_1}
  \vspace{-1em}
\end{figure}

\begin{remark}\label{Remark_1}
\emph{For the AWGN channels, compared with the accurate DE or Tal\&Vardy's algorithm, EGA is also found to well approximate the actual polarized subchannels. Note that EGA has strict order preserving property, which follows \emph{Proposition 2}. This order preserving property of EGA indeed gives reasonable interpretation about its ``good results'' versus DE. Therefore, in general, the error between EGA and DE is so small that it can be ignored.
}
\end{remark}

\begin{remark}\label{Remark_2}
\emph{Recall that Ar{\i}kan in \cite{arikan_rm} suggested a heuristic BEC approximation method to construct the polar codes for arbitrary binary-input channels, which has also yielded good results in experiments. The above PVS and PRS analyses also give interpretation about this ``good results''. One can check BEC approximation has strict order preserving property in size-two polarization transform, which shows $I( {{W_2^{(1)} }} ) < I\left( W \right) < I( {{W_2^{(2)} }} )$. Heuristic BEC approximation will not lead to polarization violation and polarization reversal in this sense. Therefore, this heuristic method will just bring some moderate performance loss rather than catastrophic jitter.
}
\end{remark}

%{\bf{Remark 3:}} Due to the high computation complexity of EGA, AGA is essentially used to perform GA construction in practice. However, compared with EGA, if ${{\cal S}_{{\rm{PVS}}}} \ne \varnothing$ and ${{\cal S}_{{\rm{PRS}}}} \ne \varnothing$, the approximation function $\Omega \left( t \right)$ of AGA will lead to catastrophic performance loss.

\subsection{CLE of Channel Polarization}
Polarization violation and polarization reversal reveal the essential reason that AGA cannot work well for long code lengths. Besides these two sets, in this subsection, we further propose a new metric, named cumulative-logarithmic error (CLE) of channel polarization, to quantitatively evaluate the remainder approximation error between AGA and EGA. Then CLE is utilized to guide the design of GA approximation function $\Omega \left( t \right)$.

For $\Omega \left( t \right)$ in AGA, suppose its ${{\cal S}_{{\rm{PVS}}}} = \varnothing$ and ${{\cal S}_{{\rm{PRS}}}} = \varnothing$, CLE will play a crucial role in evaluating the performance of AGA. We concern about the subchannel's capacity, which is a function of LLR mean under GA assumption. Hence, the difference between $\Omega \left( t \right)$ and $\phi \left( t \right)$ brings in calculation error in subchannel capacity evaluation. The original absolute error of capacity calculation between AGA and EGA is denoted by ${\Delta \left( t \right)}$, which is a function of LLR mean $t$. Without ambiguity, ${\Delta \left( t \right)}$ will be abbreviated to ${\Delta}$ in this paper.

Assume ${\Delta}$ occurs after $r$ recursions, denoted by ${\Delta _{r}}$. Thus ${\Delta _{r}}$ is accumulated as final error after $n-r$ polarization levels, and this process can be represented on a subtree with a depth $n-r$. To evaluate the calculation error of AGA, we focus on the difference of subchannel's capacity calculated by AGA and EGA in logarithmic domain. The capacities calculated by EGA can form a set $\cal I$ defined on this code subtree with the following properties:

On this subtree, the set of capacities corresponding to the nodes at a given depth $d$ is denoted by ${\cal I}_d$, $d = r,r+1, \cdots ,n$. Let $I_d^{\left( k \right)}$, $k = 1,2, \cdots ,{2^{d-r}}$, denote the $k$-th element in ${\cal I}_d$. For each $I_d^{\left( k \right)} \in {\cal I}_d$, $I_d^{\left( k \right)}$ takes value on $\left[ {0,1} \right]$. For $d > r$, $I_d^{\left( k \right)}$ is a function of path $b_{r+1}^{d}$, which is actually the binary expansion of $k-1$ (i.e., $k - 1 = \sum\nolimits_{i = 1}^{d - r} {{b_{r + i}}{2^{d - r - i}}}$). Therefore, at the root node $v_r^{\left( j \right)}$, ${\Delta _{r}}$ can be written as
\begin{equation}\label{delta_cal}
  {\Delta _r} = \tilde I_r^{\left( 1 \right)} - I_r^{\left( 1 \right)} = h\left( {\frac{2}{{\tilde m_r^{\left( j \right)}}}} \right) - h\left( {\frac{2}{{m_r^{\left( j \right)}}}} \right),
\end{equation}
where $\tilde I_r^{\left( 1 \right)}, I_r^{\left( 1 \right)}$ and ${\tilde m_r^{\left( j \right)}}, { m_r^{\left( j \right)}}$ stand for the initial symmetric capacities and LLR means calculated by AGA and EGA respectively, and the formula of $h\left(  \cdot  \right)$ is written in (\ref{BI_AWGN}).

As stated in \emph{Remark 2}, without much sacrifice in accuracy, BEC approximation will act as faithful surrogate for GA in error analysis. According to the iteration structure in channel polarization transform in \cite{arikan}, we can derive as follows
\begin{equation}\label{cap_iter}
    \left\{{\begin{array}{ll}
    I_{d + 1}^{\left( {2k - 1} \right)} = {\left( {I_{d}^{\left( k \right)}} \right)^2} & {\rm{when}}~{b_{d + 1}} = 0,\\
    I_{d + 1}^{\left( {2k} \right)} = 2I_{d}^{\left( k \right)} - {\left( {I_{d}^{\left( k \right)}} \right)^2} & {\rm{when}}~{b_{d + 1}} = 1.
    \end{array}}\right.
\end{equation}
Furthermore, when $b_{d+1}=1$, we have $I_{d + 1}^{\left( {2k} \right)} \le 2I_{d}^{\left( k \right)}$. Thus, in logarithmic domain, we can get
\begin{equation}\label{log_cap_iter}
    \left\{{\begin{array}{ll}
    {\log}I_{d + 1}^{\left( {2k - 1} \right)} = 2{\log}I_d^{\left( k \right)} & {\rm{when}}~{b_{d + 1}} = 0,\\
    {\log}I_{d + 1}^{\left( {2k} \right)} \le {\log}I_d^{\left( k \right)} + 1 & {\rm{when}}~{b_{d + 1}} = 1.
    \end{array}}\right.
\end{equation}

Define $\tilde I_d^{\left( k \right)} = I_d^{\left( k \right)} + \Delta _d^{\left( k \right)}$, where $\tilde I_d^{\left( k \right)}$ denotes the capacity corresponding to AGA, $\Delta _d^{\left( k \right)}$ represents the absolute error of capacity calculation between AGA and EGA. For $r < d \le n$, $\tilde I_d^{\left( k \right)}$ and $I_d^{\left( k \right)}$ represent the capacities, which are calculated by BEC approximation, of AGA and EGA respectively. In this paper, We only analyze the error between AGA and EGA, rather than the error of GA itself or the error brought in by heuristic BEC approximation. Therefore, $\Delta _{r}^{\left( 1 \right)} = {\Delta _{r}}$. Let $\rho _d^{\left( k \right)} = \Delta _d^{\left( k \right)}/I_d^{\left( k \right)}$ denote the relative error, and $e_d^{\left( k \right)} = {\log}\tilde I_d^{\left( k \right)} - {\log}I_d^{\left( k \right)} = {\log}\left( {1 + \rho _d^{\left( k \right)}} \right)$ denote the capacity calculation error in logarithmic domain. Hence, the partial cumulative-logarithmic error (PCLE) can be written as
\begin{equation}\label{CLEdefine}
    {{C_{r:n}} = \sum\limits_{k = 1}^{2^{n-r}} \left|{e_{n}^{\left( k \right)}}\right|}.
\end{equation}
The cumulative-logarithmic error (CLE) will be $C = \sum\limits_r {{C_{r:n}}} $.

\vspace{-0.65em}
\subsection{CLE Bound}
The precise calculation of CLE is too complicated to be analyzed using recursive relation (\ref{cap_iter}). In this subsection, we propose an upper bound on CLE to simplify its calculation.

\begin{definition}\label{definition_CLE}
\emph{For the $k$-th leaf node corresponding to a path $b_{r+1}^n$, we define $\left| {\left\{ {{r+1} \le i \le n:{b_i} = 0} \right\}} \right| = \alpha$ and $\left| {\left\{ {{r+1} \le i \le n:{b_i} = 1} \right\}} \right| = n - r - \alpha$.
}
\end{definition}

\begin{theorem}\label{theorem_3}
\emph{The logarithm error $\left|e_{n}^{\left( k \right)}\right|$ can be bounded by
\begin{equation*}
    {\left|e_{n}^{\left( k \right)}\right| \le {2^{\alpha}}\left| {{{\log }}\left( {1 + \rho _r^{\left( 1 \right)}} \right)} \right| = {2^\alpha }\left| {{{\log }}\left( {1 + \frac{{\Delta _r^{\left( 1 \right)}}}{{I_r^{\left( 1 \right)}}}} \right)} \right|}.
\end{equation*}
}
\end{theorem}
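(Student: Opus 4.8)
The plan is to track how the logarithmic error $e_d^{(k)} = \log \tilde I_d^{(k)} - \log I_d^{(k)}$ propagates edge-by-edge along the path $b_{r+1}^n$ from the root node $v_r^{(j)}$ (at depth $r$) down to the $k$-th leaf (at depth $n$), and then multiply the per-edge amplification factors. Because both the EGA recursion (\ref{cap_iter}) and its AGA counterpart (obtained by replacing $I_d^{(k)}$ with $\tilde I_d^{(k)} = I_d^{(k)} + \Delta_d^{(k)}$) use the \emph{same} update rule, I would analyze the two branch types separately and show that a left edge ($b_{d+1}=0$) exactly doubles the magnitude of the log-error, while a right edge ($b_{d+1}=1$) never increases it.

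First I would treat the check-node (left) branch. From the squaring rule in (\ref{cap_iter}), $\log \tilde I_{d+1}^{(2k-1)} = 2\log\tilde I_d^{(k)}$ and $\log I_{d+1}^{(2k-1)} = 2\log I_d^{(k)}$, so subtracting gives $e_{d+1}^{(2k-1)} = 2\,e_d^{(k)}$; the error magnitude is multiplied by exactly $2$. The variable-node (right) branch is the crux. Writing $a = I_d^{(k)}$ and $\tilde a = \tilde I_d^{(k)}$, both in $(0,1]$, the update $I\mapsto I(2-I)$ gives
\begin{equation*}
  e_{d+1}^{(2k)} = g(\tilde a) - g(a),
\end{equation*}
where $g(x) = \log\bigl(x(2-x)\bigr)$, whereas $e_d^{(k)} = \log\tilde a - \log a$. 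The key inequality I would establish is that on $(0,1]$ one has $0 \le g'(x) \le (\log x)'$, since $g'(x) = \frac{1}{\ln 2}\bigl(\tfrac1x - \tfrac1{2-x}\bigr)$ is nonnegative there (as $x\le 1$) and is dominated by $\frac{1}{x\ln2}$. Integrating this pointwise derivative bound between $a$ and $\tilde a$ (in either order) shows that $g(\tilde a)-g(a)$ has the same sign as $\log\tilde a-\log a$ but no larger magnitude, that is $|e_{d+1}^{(2k)}| \le |e_d^{(k)}|$.

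Finally I would multiply the per-edge bounds along the path. By Definition \ref{definition_CLE}, exactly $\alpha$ of the edges between depth $r$ and depth $n$ are left branches, each contributing a factor $2$, and the remaining $n-r-\alpha$ right branches contribute a factor at most $1$; the order of the edges is irrelevant because the bounds are multiplicative. Hence $|e_n^{(k)}| \le 2^\alpha\,|e_r^{(1)}|$, and the base case follows from $e_r^{(1)} = \log\tilde I_r^{(1)} - \log I_r^{(1)} = \log\bigl(1 + \Delta_r^{(1)}/I_r^{(1)}\bigr) = \log\bigl(1+\rho_r^{(1)}\bigr)$, which yields the claimed bound. The main obstacle is the right-branch step: one must verify carefully that the map $I\mapsto I(2-I)$ contracts (never expands) the log-error, and the cleanest route is the derivative comparison $0\le g'\le(\log)'$ on $(0,1]$ rather than a direct algebraic manipulation of $\log\frac{\tilde a(2-\tilde a)}{a(2-a)}$.
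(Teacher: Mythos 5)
Your proposal is correct and follows essentially the same route as the paper: an edge-by-edge analysis in which a left branch ($b_{d+1}=0$) exactly doubles the log-error, a right branch ($b_{d+1}=1$) does not increase it, and the $\alpha$ doublings counted by Definition~\ref{definition_CLE} give the factor $2^{\alpha}$. The only difference is in how the right-branch contraction is established: the paper computes the relative errors explicitly and shows $\tilde \rho_{d+1}^{(2k)}-\rho_{d+1}^{(2k)}=\frac{\Delta_d^{(k)}(I_d^{(k)}+\Delta_d^{(k)})}{I_d^{(k)}(2-I_d^{(k)})}\ge 0$ (invoking Theorem~\ref{Theorem_2} to fix the sign of $\Delta_d^{(k)}$), whereas you compare derivatives, $0\le g'(x)\le (\log x)'$ on $(0,1]$ for $g(x)=\log\bigl(x(2-x)\bigr)$, which yields both the sign preservation and the magnitude bound in one step; both arguments are valid (yours implicitly needs $\tilde I_d^{(k)}\in(0,1]$, which holds since $x\mapsto x^2$ and $x\mapsto 2x-x^2$ map $[0,1]$ into itself).
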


\begin{proof}
    Let $\tilde e_d^{\left( k \right)}$ denote the bound of $e_d^{\left( k \right)}$. Then, $\tilde e_{n}^{\left( k \right)}$ is determined by specifying $\tilde e_r^{\left( 1 \right)} = e_r^{\left( 1 \right)} = {\log}\left( {1 + \rho _r^{\left( 1 \right)}} \right)$ and
    \begin{equation}\label{recursion}
        \left\{{\begin{array}{ll}
        \tilde e_{d + 1}^{\left( {2k - 1} \right)} = D\left( {\tilde e_d^{\left( k \right)}} \right) & {\rm{when}}~{b_{d + 1}} = 0,\\
        \tilde e_{d + 1}^{\left( {2k} \right)} = E\left( {\tilde e_d^{\left( k \right)}} \right) & {\rm{when}}~{b_{d + 1}} = 1,
        \end{array}}\right.
    \end{equation}
    where $E$ : ${\mathbb{R}} \to {\mathbb{R}}$, $E\left( z \right) = z$ denotes equality, $D$ : ${\mathbb{R}} \to {\mathbb{R}}$, $D\left( z \right) = 2z$ denotes doubling.

     If $\Delta _r^{\left( 1 \right)} \ge 0$, it claims that $\Delta _d^{\left( k \right)} \ge 0$ holds in terms of \emph{Theorem \ref{Theorem_2}}. Note that during the iteration, when ${b_{d + 1}} = 0$,
      \begin{equation}\label{ineq_1}
        {0 \le e_{d + 1}^{\left( {2k - 1} \right)} = 2e_d^{\left( k \right)} = \tilde e_{d + 1}^{\left( {2k - 1} \right)}}.
      \end{equation}
      When ${b_{d + 1}} = 1$, it can be proved that $e_{d + 1}^{\left( {2k} \right)} \le \tilde e_{d + 1}^{\left( {2k} \right)}$. According to the first equation of (\ref{cap_iter}), it is easy to get that
    \begin{equation}
        {\begin{aligned}
        \tilde I_{d + 1}^{\left( {2k} \right)} & = 2\tilde I_d^{\left( k \right)} - {\left( {\tilde I_d^{\left( k \right)}} \right)^2}\\
        ~ & = 2\left( {I_d^{\left( k \right)} + \Delta _d^{\left( k \right)}} \right) - {\left( {I_d^{\left( k \right)} + \Delta _d^{\left( k \right)}} \right)^2}\\
        ~ & = \underbrace {2I_d^{\left( k \right)} - {{\left( {I_d^{\left( k \right)}} \right)}^2}}_{= I_{d + 1}^{\left( {2k} \right)}} + \underbrace {2\Delta _d^{\left( k \right)} - 2I_d^{\left( k\right)}\Delta _d^{\left( k \right)} - {{\left( {\Delta _d^{\left( k \right)}} \right)}^2}}_{= \Delta _{d + 1}^{\left( {2k} \right)}}.
        \end{aligned}}
    \end{equation}
    Therefore, $e_{d + 1}^{\left( {2k} \right)}$ can be written as
    \begin{equation}
       \begin{array}{ll}
        e_{d + 1}^{\left( {2k} \right)} & = {\log}\left( {1 + \rho _{d + 1}^{\left( {2k} \right)}} \right)\\
         ~ & = {\log}\left( {1 + \frac{{2\Delta _d^{\left( k \right)} - 2I_d^{\left( k \right)}\Delta _d^{\left( k \right)} - {{\left( {\Delta _d^{\left( k \right)}} \right)}^2}}}{{2I_d^{\left( k \right)} - {{\left( {I_d^{\left( k \right)}} \right)}^2}}}} \right).
        \end{array}
    \end{equation}
    In addition, we have
    \begin{equation}
        \tilde e_{d + 1}^{\left( {2k} \right)} = {\log}\left( {1 + \tilde \rho _{d + 1}^{\left( {2k} \right)}} \right) = e_d^{\left( k \right)} = {\log}\left( {1 + \frac{{\Delta _d^{\left( k \right)}}}{{I_d^{\left( k \right)}}}} \right).
    \end{equation}
     Then we can check
    \begin{equation}\label{forCLEupbound}
        {\tilde \rho _{d + 1}^{\left( {2k} \right)} - \rho _{d + 1}^{\left( {2k} \right)} = \frac{{\Delta _d^{\left( k \right)}\left( {I_d^{\left( k \right)} + \Delta _d^{\left( k \right)}} \right)}}{{I_d^{\left( k \right)}\left( {2 - I_d^{\left( k \right)}} \right)}} \ge 0}.
    \end{equation}
    It can be inferred that $\tilde e_d^{\left( k \right)} \ge e_d^{\left( k \right)} \ge 0$ holds.

    Recall \emph{Definition \ref{definition_CLE}}, during the iterative calculation of $\tilde e_{n}^{\left( k \right)}$, we count doubling $\alpha$ times and equality $n-r-\alpha$ times. Hence, we have
    \begin{equation}\label{boundpositive}
        {0 \le e_{n}^{\left( k \right)} \le \tilde e_{n}^{\left( k \right)} = {E^{n - r - \alpha }}{D^\alpha }\left( {e_r^{\left( 1 \right)}} \right) = {2^\alpha}e_r^{\left( 1 \right)} }.
    \end{equation}

    Analgously, if $\Delta _r^{\left( 1 \right)} < 0$, we have $\Delta _d^{\left( k \right)} < 0$. From (\ref{ineq_1}) and (\ref{forCLEupbound}), we can get
    \begin{equation}\label{boundnegative}
        {0 > e_{n}^{\left( k \right)} > \tilde e_{n}^{\left( k \right)} = {E^{n - r - \alpha }}{D^\alpha }\left( {e_r^{\left( 1 \right)}} \right) = {2^\alpha}e_r^{\left( 1 \right)} }.
    \end{equation}

    Combing (\ref{boundpositive}) and (\ref{boundnegative}), we prove the theorem.
\end{proof}

\begin{theorem}\label{Theorem_4}
\emph{PCLE ${C_{r:n}}$ can be upper bounded by
\begin{equation*}
    {{C_{r:n}} \le {3^{n-r}}\left|{\log}\left( {1 + \rho _r^{\left( 1 \right)}} \right)\right| = {3^{n-r}}\left| {{{\log }} \left( {1 + \frac{\Delta _r^{\left( 1 \right)}}{I_r^{\left( 1 \right)}}} \right)} \right|}.
\end{equation*}
}
\end{theorem}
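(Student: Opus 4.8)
The plan is to obtain the bound directly from \emph{Theorem \ref{theorem_3}} by summing the per-leaf estimate over all $2^{n-r}$ leaf nodes of the subtree, and then collapsing the resulting sum via the binomial theorem. First I would recall the definition (\ref{CLEdefine}), namely $C_{r:n} = \sum_{k=1}^{2^{n-r}} |e_n^{(k)}|$, and apply the leaf-wise bound of \emph{Theorem \ref{theorem_3}} term by term, writing $\alpha_k$ for the number of indices $i$ with $r+1 \le i \le n$ and $b_i = 0$ along the path leading to the $k$-th leaf. This immediately yields
\begin{equation*}
  C_{r:n} \le \left| \log\left(1 + \rho_r^{(1)}\right) \right| \sum_{k=1}^{2^{n-r}} 2^{\alpha_k}.
\end{equation*}

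The key observation is that as $k$ runs over all $2^{n-r}$ leaves, the associated path vector $b_{r+1}^n$ runs exactly once over every binary string of length $n-r$; this is precisely the correspondence between $k-1$ and its binary expansion already fixed before (\ref{cap_iter}). Consequently the number of leaves whose path contains exactly $\alpha$ zeros is $\binom{n-r}{\alpha}$, so I would reindex the sum by the zero-count $\alpha$ rather than by the leaf label $k$, obtaining
\begin{equation*}
  \sum_{k=1}^{2^{n-r}} 2^{\alpha_k} = \sum_{\alpha=0}^{n-r} \binom{n-r}{\alpha} 2^\alpha.
\end{equation*}

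Finally I would invoke the binomial theorem, $\sum_{\alpha=0}^{n-r} \binom{n-r}{\alpha} 2^\alpha = (1+2)^{n-r} = 3^{n-r}$, which delivers the claimed bound. The argument is essentially a counting step, so I anticipate no genuine analytic obstacle; the only point requiring care is the bookkeeping of the path-to-leaf correspondence, ensuring that each binary string of length $n-r$ is counted exactly once and that the quantity $\alpha$ in \emph{Theorem \ref{theorem_3}} is consistently interpreted as the number of check-node (left-branch) steps, each of which contributes the doubling factor $2$. Once that correspondence is pinned down, the binomial collapse is immediate, and the stated equality with $\left|\log\left(1 + \Delta_r^{(1)}/I_r^{(1)}\right)\right|$ follows at once from the definition $\rho_r^{(1)} = \Delta_r^{(1)}/I_r^{(1)}$.
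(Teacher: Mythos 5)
Your proposal is correct and matches the paper's own proof essentially step for step: both apply the leaf-wise bound of \emph{Theorem \ref{theorem_3}}, reindex the sum over leaves by the zero-count $\alpha$ using the count $\tbinom{n-r}{\alpha}$, and collapse via the binomial theorem $\sum_{\alpha=0}^{n-r}\tbinom{n-r}{\alpha}2^{\alpha}=3^{n-r}$. No gaps; your extra care about the bijection between leaves and binary strings of length $n-r$ is exactly the bookkeeping the paper leaves implicit.
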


\begin{proof}
    For any $k \in \left\{ {1,2, \cdots ,{2^{n-r}}} \right\}$, the number of vectors $b_{r+1}^n$ satisfying \emph{Definition \ref{definition_CLE}} is $\tbinom{n-r}{\alpha}$, where $b_{r+1}^n$ is the binary expansion of $k-1$. Combined with definition (\ref{CLEdefine}) and \emph{Theorem \ref{theorem_3}}, $C_{r:n}$ satisfies the following constraint
    \begin{equation}\label{CLEupbound}
        {C_{r:n}} \le \sum\limits_{k = 1}^{{2^{n - r}}} {\left| {\tilde e_{n}^{\left( k \right)}} \right|}  = \sum\limits_{\alpha  = 0}^{n - r} {\tbinom{n-r}{\alpha} {2^\alpha }\left| {e_r^{\left( 1 \right)}} \right|}  = {3^{n - r}}\left| {e_r^{\left( 1 \right)}} \right|.
    \end{equation}
    The last equation in (\ref{CLEupbound}) uses binomial theorem.
    Therefore, CLE $C$ can be upper bounded by $\sum\limits_r {{3^{n - r}}\left| {{{\log }_2}\left( {1 + \rho _r^{\left( 1 \right)}} \right)} \right|}$, and the exponent $n - r$ stands for polarization levels.
\end{proof}

\section{Improved Gaussian Approximation}
In this section, guided by the previous PVS, PRS and CLE analyses, we propose new rules to design AGA for the construction of polar codes. Then we give three specific forms of the approximation function in AGA, which have advantages in both complexity and performance.

\vspace{-0.5em}
\subsection{Design Rules of AGA}
For channels other than BEC (e.g. AWGN channel), AGA is widely used to construct polar codes. However, in practical implementation, the accuracy of approximation function $\Omega \left( t \right)$ will greatly affect the construction of polar codes especially when the code length is long. According to \emph{Theorem \ref{Theorem_4}}, the initial error will be exponentially amplified with the increase of polarization levels. Note that PCLE bound is mainly affected by two factors: the first term $3^{n-r}$ is relevant to polarization levels, and the second term is dependent on the original relative error ${\rho _r^{\left( 1 \right)}} = \Delta _r^{\left( 1 \right)}/I_r^{\left( 1 \right)}$. In general, the absolute error $\Delta _r^{\left( 1 \right)}$ is tiny. Therefore, for the good channels whose capacities $I_r^{\left( 1 \right)}$ approach $1$, their original relative errors ${\rho _r^{\left( 1 \right)}}$ are so small that they can be ignored. However, for the bad channels whose capacities $I_r^{\left( 1 \right)}$ approach $0$, their original relative errors are not negligible. Subsequently, given a fixed absolute error $\Delta _r^{\left( 1 \right)}$, the more $I_r^{\left( 1 \right)}$ is close to $0$, the larger the original relative error ${\rho _r^{\left( 1 \right)}}$ becomes.

Above analysis indicates that CLE bound is mainly affected by the terms $C_{r:n}$ with the small initial capacity $I_r^{\left( 1 \right)}$, which corresponds to the bad channels. Due to the error, some frozen subchannels will be wrongly identified as information-carrying ones (role flipping), which results in the performance degradation. Note that the capacity $I_r^{\left( 1 \right)}$ monotonically increases with LLR mean $t$ under the GA assumption. In addition, according to (\ref{capacity_limits}), we have
\begin{equation}
  \mathop {\lim }\limits_{t \to 0} I_r^{\left( 1 \right)} = 0, \mathop {\lim }\limits_{t \to +\infty } I_r^{\left( 1 \right)} = 1.
\end{equation}

Guided by above analyses, the AGA approximation function design is composed of three rules:

\begin{enumerate}[Rule 1)]
  \item PVS and PRS eliminating: $\Omega \left( t \right)$ should guarantee ${{\cal S}_{{\rm{PVS}}}} = \varnothing$ and ${{\cal S}_{{\rm{PRS}}}} = \varnothing$. According to \emph{Proposition \ref{proposition_PVS_PRS_relationship}} and its converse-negative proposition, if ${{\cal S}_{{\rm{PVS}}}} = \varnothing$, we have ${{\cal S}_{{\rm{PRS}}}} = \varnothing$. Hence, in order to empty PVS and PRS, we should ensure $0 < \Omega \left( t \right) < 1$ for any $t \in \left( {0, + \infty } \right)$.
  \item Low SNR design: When $t$ comes close to $0$, we must guarantee $\mathop {\lim }\limits_{t \to 0} \Omega \left( t \right) = 1$ to reduce approximation error. Since CLE bound is amplified exponentially with the growth of polarization levels, the only way to reduce CLE bound is to lower the initial relative error ${\rho _r^{\left( 1 \right)}}$. Therefore, $\Omega \left( t \right)$ needs to be divided into more segments when $t$ approaches $0$. This rule can reduce the original absolute error $\Delta _r^{\left( 1 \right)}$ in the vicinity of $0$ so as to lower ${\rho _r^{\left( 1 \right)}}$.
  \item  High SNR design: When $t$ stays away from $0$, thanks to the relatively large $I_r^{\left( 1 \right)}$, CLE bound can tolerate a more obvious absolute error $\Delta _r^{\left( 1 \right)}$. Therefore, $\Omega \left( t \right)$ can be selected with some simpler forms to reduce the computational complexity.
\end{enumerate}

Besides, $\Omega \left( t \right)$ should keep continuity between the adjacent two segments, which can mitigate the jitter of CLE bound by keeping smooth of initial error. Among these three rules, Rule 1 is the most crucial, which can prevent the corresponding AGA polar code construction scheme from severe performance loss. Then the importance of Rule 2 is second, which reduces the remainder approximation error between AGA and EGA. Rule 3 plays a less important role, which helps to further reduce the computational complexity of AGA.

\vspace{-0.5em}
\subsection{Improved AGA Approximation Function}
Recall the analysis in Section IV, Chung's two-segment AGA scheme leads to ${{\cal S}_{{\rm{PVS}}}} \ne \varnothing$ and ${{\cal S}_{{\rm{PRS}}}} \ne \varnothing$, which violates Rule 1. Therefore, guided by above rules, we design the following new two-segment approximation function by employing curve-fitting algorithm with minimum mean square error criterion, denoted by ${\Omega _2}\left( t \right)$,
\begin{equation}\label{AGA_2}
    {\Omega _2}\left( t \right) = \left\{ {\begin{array}{*{20}{ll}}
    {{e}^{0.0116{t^2} - 0.4212t}} & {0 < t \le a,}\\
    {{e}^{-0.2944t - 0.3169}} & {a < t,}\\
    \end{array}} \right.
 \end{equation}
where the boundary point $a = 7.0633$. The corresponding new AGA algorithm is denoted by AGA-2. In the construction of polar codes, compared with Chung's scheme, AGA-2 can eliminate rank error and therefore will not lead to catastrophic performance loss at the long code lengths. In addition, since the inverse function of ${\Omega _2}\left( t \right)$ can be analytically obtained, it has lower calculating complexity. Furthermore, guided by Rule 3, for ${f_c}\left( t \right)$ in (\ref{f1f2}), when $t$ leaves away from $0$, ${1 - {\Omega _2}\left( t \right)}$ tends to $1$, which indicates
\begin{equation}\label{simp_phi}
  {\left( {1 - {\Omega _2}\left( t \right)} \right)^2} \approx 1 - {\Omega _2}\left( t \right) \Rightarrow {f_c}\left( t \right) \approx t.
\end{equation}
Followed by \emph{Proposition 2}, ${f_c}\left( t \right)$ should satisfy ${f_c}\left( t \right) < t$. Thus, when $t$ stays away from $0$, the complex ${f_c}\left( t \right)$ can be further approximated as
\begin{equation}
  {f_c}\left( t \right) = t - \varepsilon,
\end{equation}
where $\varepsilon$ denotes the offset. Then according to (\ref{f1f2}), when ${\Omega _2}\left( t \right)$ tends to $0$, one can check
\begin{equation}\label{AGA_simple}
  {\Omega _2} \left( {t - \varepsilon } \right) = 2{\Omega _2} \left( t \right) - {\Omega _2} {\left( t \right)^2} \approx 2{\Omega _2} \left( t \right).
\end{equation}
Therefore, for ${\Omega _2}\left( t \right)$, in terms of (\ref{AGA_2}), when $t \gg 0$ we have
\begin{equation}
  \begin{aligned}
    {{e}^{ - 0.2944\left( {t - \varepsilon } \right) - 0.3169}} & = {{e}^{ - 0.2944t - 0.3169 + \ln 2}}\\
     \Rightarrow \varepsilon & = 2.3544,
    \end{aligned}
\end{equation}
where $t - \varepsilon $ should locate in the sencond segment, namely $t - \varepsilon  > a$, which claims $t > 9.4177$. Following Rule 3, for the entire AGA-2 scheme, its ${f_c}\left( t \right)$ is further simplified as
\begin{equation}\label{AGA3_final}
  {f_c}\left( t \right) = \left\{ {\begin{array}{*{20}{ll}}
    {{{\Omega}_2^{ - 1}}\left( {1 - {{\left( {1 - {\Omega _2}\left( t \right)} \right)}^2}} \right)}&{0 \le t \le \tau ,}\\
    {t - 2.3544}&{t > \tau ,}
    \end{array}} \right.
\end{equation}
where the boundary point is $\tau = 9.4177$.

From the subsequent CLE analysis, we find that although AGA-2 can satisfy Rule 1 so as to remove the rank error, it will still bring obvious approximation error in the vicinity of $0$. Therefore, when the code length is relatively long, AGA-2 will also bring some moderate performance loss. In order to further improve the GA construction performance, we propose a new piecewise function ${\Omega _3}\left( t \right)$ with three segments by using curve-fitting algorithm, that is
\begin{equation}\label{AGA_3}
    {\Omega _3}\left( t \right) = \left\{ {\begin{array}{*{20}{ll}}
    {{e}^{0.06725{t^2} - 0.4908t}} & {0 < t \le a,}\\
    {{{\mathop{e}\nolimits} ^{ - 0.4527{t^{0.86}} + 0.0218}}}&{a < t \le b,}\\
    {{e}^{ - 0.2832t - 0.4254}} & {b < t,}
    \end{array}} \right.
\end{equation}
where the boundary points $a = 0.6357$ and $b = 9.2254$. Its corresponding AGA algorithm is denoted by AGA-3. It is specially designed for polar codes, which follows the proposed rules. AGA-3 can better satisfy Rule 1 and Rule 2. Especially in the third segment, namely the high SNR region, AGA-3 has lower computational complexity compared to Chung's scheme which obeys Rule 3. In addition, the inverse function of ${\Omega _3}\left( t \right)$ can be easily obtained. Similar with AGA-2, for the whole AGA-3 scheme, its ${f_c}\left( t \right)$ can be further simplified as
\begin{equation}\label{AGA3_final_function}
  {f_c}\left( t \right) = \left\{ {\begin{array}{*{20}{ll}}
    {{{\Omega}_3^{ - 1}}\left( {1 - {{\left( {1 - {\Omega _3}\left( t \right)} \right)}^2}} \right)}&{0 \le t \le \tau ,}\\
    {t - 2.4476}&{t > \tau ,}
    \end{array}} \right.
\end{equation}
where the boundary point is $\tau = 11.673$.

There is no doubt that, if the code length becomes extremely long, the three-segment approximation function in AGA-3 will also bring calculation error, which obeys  \emph{Theorem \ref{Theorem_4}} and is stated in Rule 2. Hence, we design the following four-segment improved AGA scheme for extremely long polar code, that is
\begin{equation}\label{AGA_4}
    {\Omega _4}\left( t \right) = \left\{ {\begin{array}{*{20}{ll}}
    {{e}^{0.1047{t^2} - 0.4992t}} & {0 < t \le a,}\\
    {0.9981\cdot{e}^{0.05315{t^2} - 0.4795t}} & {a < t \le b,}\\
    {{{\mathop{e}\nolimits} ^{ - 0.4527{t^{0.86}} + 0.0218}}}&{b < t \le c,}\\
    {{e}^{ - 0.2832t - 0.4254}} & {c < t,}
    \end{array}} \right.
\end{equation}
where the boundary points $a = 0.1910$, $b = 0.7420$ and $c = 9.2254$. The approximation accuracy in the vicinity of $0$ is further improved. Its corresponding AGA algorithm is denoted by AGA-4. The inverse function of ${\Omega _4}\left( t \right)$ can also be analytically obtained. Since the last two segments in ${\Omega _4}\left( t \right)$ are the same as that in ${\Omega _3}\left( t \right)$, the ${f_c}\left( t \right)$ function in AGA-4 also has the same form as the equation (\ref{AGA3_final_function}).

The computational complexity of ${f_c}\left( t \right)$ has been remarkably reduced thanks to Rule 3 in the above three proposed AGA schemes. Moreover, Rule 1 and Rule 2 help AGA to achieve excellent performance. Thus, for the general construction of polar codes, the proposed AGA-2 scheme is a good alternate to improve the Chung's conventional AGA scheme. When the code length becomes longer, the proposed AGA-3 scheme will achieve better performance than AGA-2 scheme. If the code length becomes extremely long, AGA-4 will present better performance. When the code length continues to grow, AGA-4 will inevitably bring calculation error, which follows \emph{Theorem \ref{Theorem_4}} and Rule 2. Nevertheless, followed by Rule 1 to 3, we can still design AGA with more complex multi-segment approximation function to keep the calculation accuracy. In this way, we provide a systematic framework to design high accuracy AGA scheme for polar codes at any code length.

As for the specific forms of the new AGA approximation functions, it is heuristically obtained from Chung's original $\varphi\left( t \right)$ while taking the Rule 1 to 3 into consideration. The first segment should ensure $\mathop {\lim }\limits_{t \to 0} {\Omega}\left( t \right) = 1$ and use some complex functions to reduce the relative error. When $t$ stays away from $0$, $\Omega \left( t \right)$ can be selected with some simpler forms to reduce the computational complexity. After choosing the approximation form, its parameters are acquired by curve-fitting. Certainly, one can choose different approximation functions according to the proposed rules. Finally, the performance will be evaluated by calculating its CLE bound.

For other channels $\overline W $ rather than the AWGN (e.g. the Rayleigh fading channel), one may expect better performance at the expense of more complexity in the code construction by using DE. However, we can approximate the channel $\overline W $ using an AWGN channel $W$ with $\sigma $, where
\begin{equation}
  I\left( {\overline W } \right) = I\left( W \right) = h\left( {{\sigma ^2}} \right).
\end{equation}
The code construction is then then performed over each of the equivalent AWGN channels in the same GA way as that in the conventional AWGN case. As that will be shown in Section VI, the proposed AGA based construction of polar codes works well for other channels as well. This is significant in that it shows the robustness of the AGA construction against uncertainty and variation in channel parameters.

\vspace{-0.5em}
\subsection{Complexity Comparison}
In this part, we compare the order of complexities with four typical polar code construction methods, which are Ar{\i}kan's heuristic BEC approximation method, DE algorithm, Tal and Vardy's method and our proposed AGA schemes. Note that the complexity order provided only includes the dominant terms, and the detailed number of calculation depends heavily on the specific hardware implementation, which is beyond the scope of this paper.

One can check the principle of these four methods is to recursively calculate each subchannel's reliability on the code tree. Therefore, their the total number of node visiting complexity can be written as $O\left( N \right) + O\left( {N/2} \right) + O\left( {N/4} \right) +  \cdots  + O\left( 2 \right) + O\left( 1 \right) = O\left( N \right)$. It seems that these four methods have the same number of node visiting. Furthermore, we analyze the computational complexity of each visiting.
\begin{itemize}
  \item For the heuristic BEC approximation, the computational complexity of each visiting operation is $O\left( 1 \right)$, which can be ignored. Hence, it has the lowest complexity among the four schemes.
  \item DE needs a fast Fourier transform (FFT) and an inverse fast Fourier transform (IFFT) operation when calculating the probability density function (PDF) of LLR of each bit. The corresponding complexity is $O\left( {\xi \log \xi } \right)$, where $\xi$ denotes the number of samples. However, a typical value of $\xi$ is about $10^5$ \cite{niukai}, implying a huge computational burden in practical application. The difficulty is further aggravated by the quantization errors, which will be accumulated over multiple polarization levels.
  \item Tal and Vardy's method can be viewed as an approximate version of DE \cite{talvardy} and has a complexity of $O( {{\mu ^2}\log \mu } )$, where $\mu$ is a fixed, even and positive integer independent of code length $N$. In general, a typical value of $\mu$ is $256$ which is much less than $\xi$. Hence, this method has much lower complexity than DE. But when the code length becomes long, Tal and Vardy's method still involves relatively high computational complexity.
  \item Regarding the proposed AGA method, the complexity of calculating ${f_v}\left( t \right)$ can be ignored since the result can be obtained easily. For ${f_c}\left( t \right)$, when $t > \tau$, its calculating complexity can also be ignored. When $0 \le t \le \tau$, since the inverse function in ${f_c}\left( t \right)$ can be analytically obtained, the computational complexity is $O\left( 1 \right)$. From these comparison, it can be concluded that our proposed AGA schemes have similar calculating complexity order with the heuristic BEC approximation, which are much lower than DE or Tal and Vardy's algorithms.
\end{itemize}

%\begin{figure}[t]
%\setlength{\abovecaptionskip}{0.cm}
%\setlength{\belowcaptionskip}{-0.cm}
%  \centering{\includegraphics[scale=0.68]{CLE_2.pdf}}
%  \caption{CLE bounds comparison for different AGAs where $n = 20$.}\label{CLE_plot_2}
%  \vspace{-1em}
%\end{figure}

\section{Performance Evaluation}

\begin{figure}[b]
\setlength{\abovecaptionskip}{0.cm}
\setlength{\belowcaptionskip}{-0.cm}
  \centering{\includegraphics[scale=0.64]{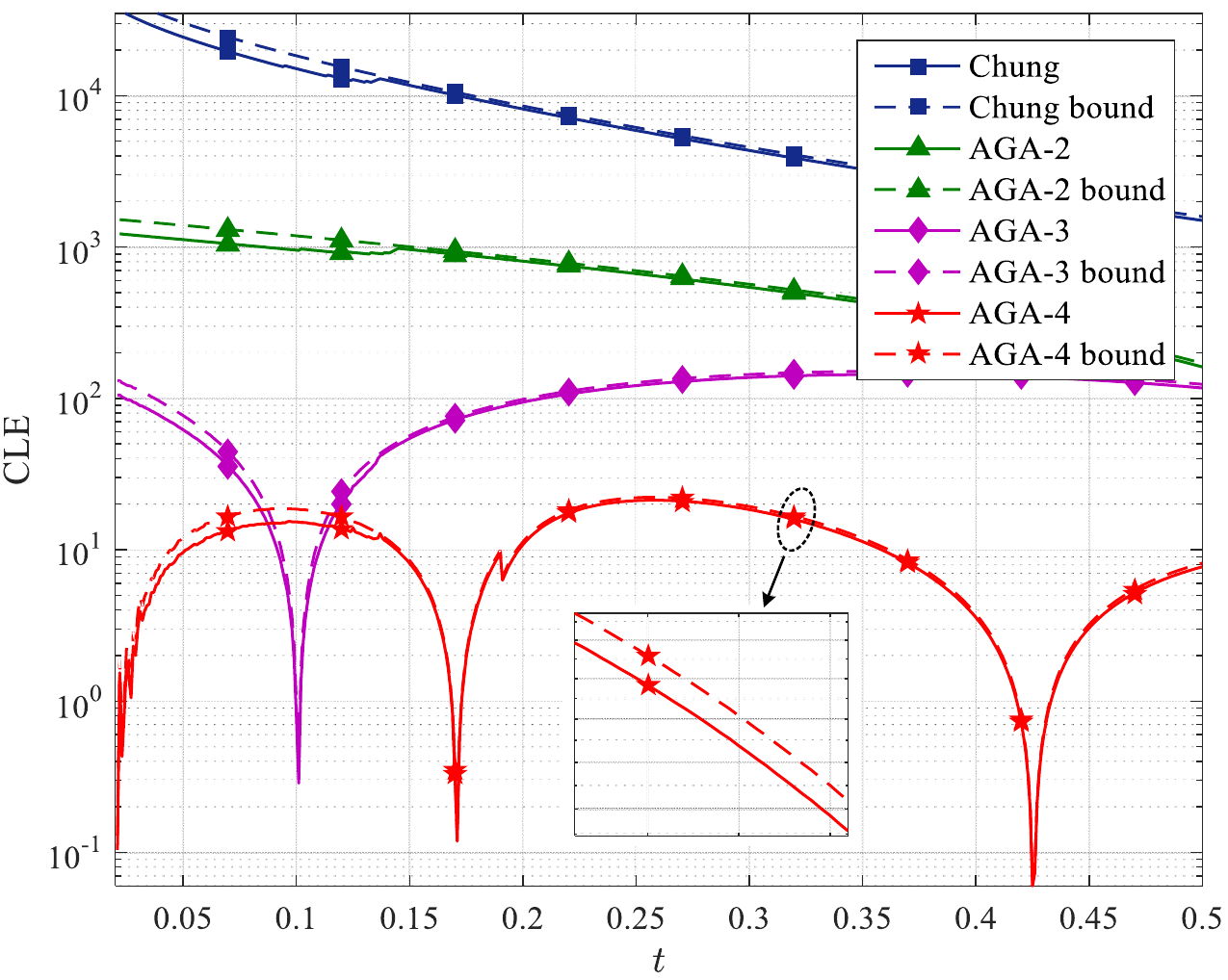}}
  \caption{The precise results and corresponding upper bounds of CLE for various AGAs, where $t$ stands for LLR mean and polarization level $n = 8$.}\label{CLE_plot_1}
\end{figure}

The precise results and corresponding upper bounds of CLE for various AGAs are depicted in Fig. \ref{CLE_plot_1}, where the polarization level $n = 8$ and $r = 0$ (CLE $C = {C_{0:8}}$). It can be found that the CLE bound and the exact result coincide well. Therefore, the CLE bound may be used as an effective tool to evaluate the performance of various AGA schemes. From Fig. \ref{CLE_plot_1}, we can see that the CLE bound of Chung's conventional AGA scheme is obviously higher than that of AGA-2 and AGA-3, which indicates the poor performance of polar codes when Chung's AGA scheme is used. Compared with AGA-2, the CLE bound of AGA-3 also shows some performance gain. Furthermore, AGA-4 can achieve the best performance among these AGA schemes. We notice that there exists a non-monotonic behavior for the CLE bound. This phenomenon is caused by function fitting because there is different relative error $\rho _r^{\left( 1 \right)}$ for different $t$, which results in the jitter of CLE bound.

Next we compare the BLER performance among different construction schemes under the BI-AWGN channel, which is shown in Fig. \ref{BLER_plot_1}. All the schemes have code rate $R = 1/3$ with the SC decoding. The code lengths $N$ are set to $2^{12}$, $2^{14}$ and $2^{18}$, respectively. From these results, it can be found that the BEC approximation scheme demonstrates some moderate performance loss compared with other advanced construction methods. For the extremely long code lengths, since DE falls into a huge computational burden in the practical application, we use Tal and Vardy's method as an alternate with almost no performance loss. Therefore, Tal and Vardy's construction possesses the highest accuracy in Fig. \ref{BLER_plot_1}. However, it still requires high computational complexity at long code lengths. Among the AGA schemes, we can see that Chung's scheme suffers from a dramatic performance loss with the increase of code length since it violates Rule 1. On the contrast, our proposed AGA-2 achieves good performance. Hence, AGA-2 can be employed as a good alternate for Chung's two-segment method at some moderate code lengths. It can also be observed that the proposed AGA-3 scheme achieve better performance which follows Rule 1 and Rule 2. Furthermore, we can see that AGA-4 scheme approaches the performance of Tal and Vardy's method, which can be used for some extremely long code length. In addition, with the increase of code length, the performance gap between AGA-3 and AGA-4 becomes larger. Therefore, in terms of the previous CLE bound analyses and simulation results, it can be predicted that the performance gap will become more and more obvious with the increase of code length.

The comprehensive BLER performance comparisons with the SC decoding in terms of different construction schemes under the Rayleigh fading channels are given in Fig. \ref{BLER_plot_2}. The code length is $N = 2^{14}$, and the code rates $R$ are set to $1/3$ and $2/5$, respectively. Similar with Fig. \ref{BLER_plot_1}, we can see that DE algorithm achieves the best performance. Chung's scheme also presents poor performance since it cannot satisfy Rule 1. On the contrary, AGA-4 scheme suffers from an ignorable loss of performance compared with DE algorithm. Furthermore, it becomes more computationally efficient and implementable in practical use than the former. In addition, AGA-4 can stably achieve $0.15\sim0.2$dB gain for different code rates compared with AGA-3 scheme. These results shows the robustness of the AGA construction against uncertainty and variation in channel parameters, which is valuable for polar coding.

\begin{figure}[htbp]
\setlength{\abovecaptionskip}{0.cm}
\setlength{\belowcaptionskip}{-0.cm}
  \centering{\includegraphics[scale=0.63]{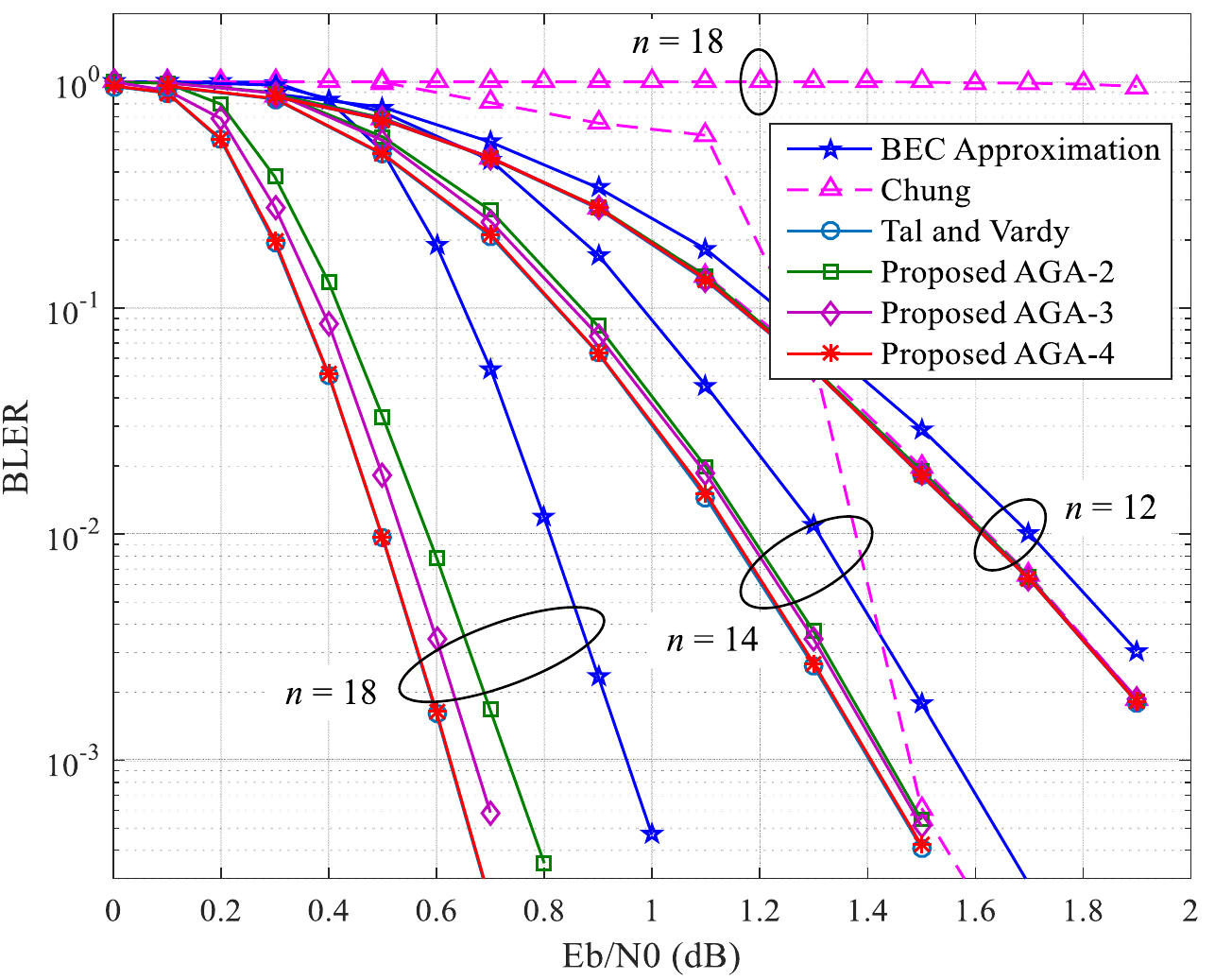}}
  \caption{The SC decoding BLER performance comparison of polar codes with the code length $N = 2^n$ ($n = 12,14,18$) and the code rate $R = 1/3$ over the BI-AWGN channel.}\label{BLER_plot_1}
  \vspace{-1em}
\end{figure}

\begin{figure}[htbp]
\setlength{\abovecaptionskip}{0.cm}
\setlength{\belowcaptionskip}{-0.cm}
  \centering{\includegraphics[scale=0.54]{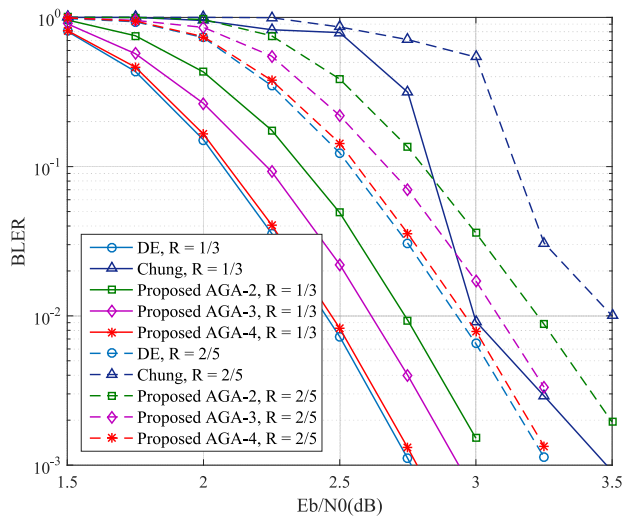}}
  \caption{The SC decoding BLER performance comparison of polar codes with code length $N = 2^{14} = 16384$ and code rates $R = 1/3$ and $R = 2/5$ over the Rayleigh fading channel.}\label{BLER_plot_2}
  \vspace{-1em}
\end{figure}

\begin{figure}[htbp]
\setlength{\abovecaptionskip}{0.cm}
\setlength{\belowcaptionskip}{-0.cm}
  \centering{\includegraphics[scale=0.54]{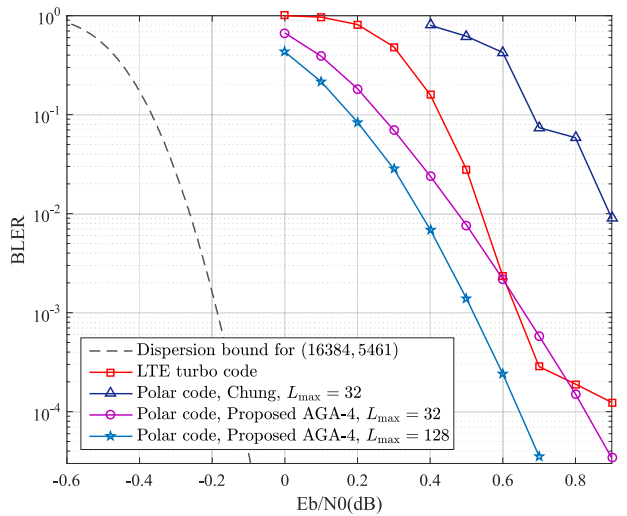}}
  \caption{The Ad-CASCL decoding BLER performance comparison of polar code and LTE turbo code with the code length $N = 2^{14} = 16384$ and the code rates $R = 1/3$. The channel is configured as the BI-AWGN.}\label{BLER_plot_3}
  \vspace{-0.7em}
\end{figure}

In Fig. \ref{BLER_plot_3}, we provide the BLER performance of polar codes by using the adaptive cyclic redundancy check (CRC) aided SC list decoding (Ad-CASCL) \cite{CAdec_niukai,aCAdec_libin}. The maximum list size in the Ad-CASCL decoder is denoted by $L_{\max}$. The 16-bit CRC in LTE standard \cite{LTE} is used. The performance of LTE turbo code is also given as a comparison, where the Log-MAP decoding is applied in the turbo decoder with 8 iterations \cite{Lin_shu_channel_coding}. We can see that Chung's conventional AGA scheme shows poor performance. It indicates that this traditional two-segment AGA scheme is not suitable for the polar code construction. On the contrary, the polar codes constructed by the proposed AGA-4 scheme perform well. When $L_{\max} = 32$, polar code can outperform LTE turbo code in the low SNR region. For the high SNR region, although turbo code sometimes performs better than polar code, it suffers from the error floor. When $L_{\max}$ is set to $128$, polar code can outperform LTE turbo code for any SNR. Additionally, this polar code constructed by AGA-4 scheme with $L_{\max} = 128$ Ad-CASCL decoding can achieve BLER $\le 10^{-3}$ at ${E_b}/{N_0} = 0.51$dB. We compare this performance to the Shannon limit at the same finite block length, which is provided in \cite{finite_shannon_limit}. The maximum rate that can be achieved at block length $N$ and BLER $\epsilon$ can be well approximated by
\begin{equation}\label{finite_shannon_limit}
  {R_{\max }} = C - \sqrt {\frac{V}{N}} {Q^{ - 1}}\left( \epsilon \right),
\end{equation}
where $C$ is the channel capacity and $V$ is a quantity called the channel dispersion that can be computed from the channel statistics, using the formula:
\begin{equation}
  V = {\mathop{\rm Var}} \left[ {\log \frac{{W\left( {Y\left| X \right.} \right)}}{{W\left( Y \right)}}} \right].
\end{equation}
For the BI-AWGN channel, the transition probability $W\left( {y\left| x \right.} \right)$ is written in (\ref{BIAWGN_trans}). The channel dispersion $V$ is written as
\begin{equation}
\begin{aligned}
  V & = \frac{1}{2}\sum\limits_{x \in {\mathbb B}} {\int_{\mathbb R} {W\left( {y\left| x \right.} \right){{\left[ {\log \left( {\frac{{2W\left( {y\left| x \right.} \right)}}{{W\left( {y\left| 0 \right.} \right) + W\left( {y\left| 1 \right.} \right)}}} \right)} \right]}^2}dy} }\\
   ~ & - {I^2}\left( W \right).
\end{aligned}
\end{equation}
By using (\ref{finite_shannon_limit}), we can calculate the Shannon limit for the $\left( {N, K} \right) = \left( {16384, 5461} \right)$ code which is named as the ``dispersion bound'' in Fig. \ref{BLER_plot_3}. To achieve a rate $R = 5461/16384 = 1/3$, the minimum ${E_b}/{N_0}$ required is $-0.186$dB. Hence, polar code constructed by AGA-4 with $L_{\max} = 128$ is $0.696$dB from the Shannon limit. When $L_{\max}$ increases, this SNR gap will become smaller.

\section{Conclusion}
In this paper, we introduced the concepts of PVS and PRS which explain the essential reason that polar codes constructed by conventional AGA expresses poor performance at long code lengths. Then we proposed a new metric, named CLE, to quantitatively evaluate the remainder error of AGA. We further derived the upper bound of CLE to simplify its calculation. Guided by PVS, PRS and CLE bound analysis, we proposed new rules to design AGA for polar codes. Simulation results show that the performance of all AGA schemes is consistent with CLE analysis. When the polarization levels increase, Chung's conventional AGA scheme suffers from a catastrophic performance jitter. On the contrary, the proposed AGA methods guided by the proposed rules stably guarantee the excellent performance of polar codes for both the AWGN channels and the Rayleigh fading channels.

\ifCLASSOPTIONcaptionsoff
  \newpage
\fi

% that's all folks
\end{document}